\documentclass[12pt]{article}
\usepackage{cite}
\usepackage[a4paper,hscale=0.78,vscale=0.73]{geometry}
\usepackage{graphicx}
\usepackage{amssymb}
\usepackage{amsmath}
\usepackage{amsthm}
\usepackage{amscd}
\usepackage[all,cmtip]{xy} 
\numberwithin{equation}{section}
\usepackage{empheq}

\usepackage[british]{babel}

\author{Ziyang Hu\footnote{\texttt{z.hu@damtp.cam.ac.uk}}\\
D.A.M.T.P.\\
 University of Cambridge}
\title{On the General Problem of\\
Structure-Preserving Submersions}

\newcommand{\pd}{\partial}
\newcommand{\rs}{\mathbb{R}}
\newcommand{\cxs}{\mathbb{C}}
\DeclareMathOperator{\Ad}{Ad}
\DeclareMathOperator{\ad}{ad}
\DeclareMathOperator{\Ricci}{Ricci}

\newtheoremstyle{shape0}
  {9pt}
  {9pt}
  {}
  {}
  {\bfseries}
  {.}
  {.5em}
  {}

\newtheoremstyle{shape1}
  {9pt}
  {9pt}
  {\it}
  {}
  {\bfseries}
  {.}
  {.5em}
  {}

\newtheoremstyle{shape2}
  {9pt}
  {9pt}
  {}
  {}
  {\bfseries}
  {.}
  {.5em}
  {}

\theoremstyle{shape1}
\newtheorem{thm}{Theorem}
\newtheorem{prop}[thm]{Proposition}
\newtheorem{cor}[thm]{Corollary}

\theoremstyle{shape0}

\theoremstyle{shape2}
\newtheorem{dfn}[thm]{Definition}
\newtheorem{ex}[thm]{Example}
\theoremstyle{definition}

\begin{document}
\maketitle
\abstract{In this paper we give a general geometrical framework for working with problems that can be described as a structure-preserving submersion defined on a suitable space with a geometrical structure. We give many examples of how to formulate familiar problems arsing from physics in our framework.  Then, as an application of this framework we derive the new results of generalisations of the classical Herglotz-Noether theorem to arbitrary dimensions and to all conformally-flat ambient spaces. Possible further directions for research are also discussed.}
\newpage
\tableofcontents
\newpage
\section{Introduction}
\label{sec:introduction}

In physics and other fields where we use a geometrical space as a model for doing computations, situations that can be roughly described as ``structure-preserving submersions'' arise frequently. The archetypal example is given by the problem of rigid motion in classical mechanics where the geometrical structure of the moving body is preserved by the motion. In \cite{me:000} we see that a straightforward generalisation to special relativity is possible. Many techniques have been devised to solve these problems. However, these techniques are adapted to the particular problem at hand and, for example, techniques useful in the study of classical mechanical rigidity would not be very useful in the study of relativistic rigidity.

In this paper we describe a general geometrical framework encompassing all of the geometrical problems that can be described as a ``structure-preserving submersion''. This programme shall enable us to study many properties common to all such problems in an abstract way, while at the same time understand more deeply where many of the differences that do occur in the different problems arise. We shall also see that this framework gives a very robust foundation for doing computations, and problems that were computationally too expensive to carry out before is now much simpler. Most importantly, this framework gives us a very clear \emph{geometrical} picture of what is happening in a problem, and hence can effectively guide us through the computations without getting lost in the details.

Our plan for this paper is as follows. In section \ref{sec:structures-manifold} we will first describe how to formulate the concept of a ``manifold with structure'' within the Cartan framework, which covers all classical geometries and their inhomogeneous generalisations, for example, Riemannian geometry. In section \ref{sec:subm-immers-topol} we use the framework to first discuss how structure-preserving immersion fits into our programme, then goes into structure-preserving submersion proper. The reason for dealing with immersion first is that, this problem is simpler but shares some structural similarities with the submersion case, so that we can gain some experience, and more importantly, as we shall see, every submersion problem contains a family of submersion problems. We then give many simple examples of how familiar problems can be formulated in our framework, especially the examples of Riemannian and conformal submersions. Next, in section \ref{sec:riemannian-geometry}, we use our framework to derive some very rigid constraints on the forms of rigid flows in Riemannian spaces (and with a few change of signs, pseudo-Riemannian spacetimes). Our results will include a form of generalisation of the Herglotz-Noether theorem, namely that a rotational rigid motion in conformally flat space must be isometric in the Riemannian case, and a rotational conformally rigid motion in flat space must be conformally isometric in the conformal geometry case. This is a further extension to the result derived in \cite{me:000}, where more basic results are obtained using more down-to-earth methods but with much less theoretical underpinning. This will show the utility of our approach of solving problems. Along the way we will discuss several further directions worth pursuing, which we will summarise in section \ref{sec:conclusion}.

\section{Structures on a manifold}
\label{sec:structures-manifold}

In order to define a \emph{structure-preserving submersion} on a manifold we first need to understand what it means for a manifold to have a geometrical structure. A smooth manifold can be locally represented by $\rs^{n}$, which by itself is assumed to have no structure at all, and the simplest way to give it a structure is to specify the usual Euclidean metric on it. A general way of specifying the geometrical structure is to consider the transformation group preserving this geometrical structure instead: this is Klein's Erlangen programme. However, Klein's programme considers only \emph{homogeneous} geometrical structures, whereas in our application this condition is obviously too restrictive, and we need the full power of Cartan's extension of \emph{espaces g\'en\'eralis\'es}, which we shall now describe.

\subsection{Cartan geometry}
\label{sec:cartan-geometry}

Let $M$ be a manifold with a geometrical structure $\Gamma$ defined on it. A (local) \emph{admissible transformation} $f:M\rightarrow M$ is a (local) automorphism such that $f^{*}\Gamma = \Gamma$. We assume that the set of all admissible transformations $f$ locally have the structure of a Lie group $G$ --- this is not a very restrictive assumption at all. We assume further that $G$ acts transitively (otherwise we will simply use a quotient group) on $M$ on the right, and for any point $p\in M$, the isotropy group $H_{x}$ such that $f(x)=x$ for all $f\in H_{x}$ are all isomorphic, so we write $H_{x}\cong H$. Then the geometrical property $\Gamma$ is encoded in a very particular coframing on the principal $H$-bundle over $M$. First recall that a principal $H$ bundle is a smooth fibre bundle $\xi = (P,M,\pi,H)$ together with a right action $P\times H\rightarrow P$ that is fibre preserving and acts simply transitively on each fibre. Note that the fibre is isomorphic to the group, but which point in a fibre to choose as the identity element is arbitrary. This arbitrariness is captured by the right action. We are now ready to give the definition of a \emph{Cartan geometry}:
\begin{dfn}
  Let $G$ acts locally transitively on $M$ with isotropy group $H$. Let $\mathfrak{g}$ and $\mathfrak{h}$ be their respective Lie algebras. Then the Cartan geometry $\xi=(P,\omega)$ on $M$ modelled on $(\mathfrak g, \mathfrak h)$ consists of the following data:
  \begin{itemize}
  \item the smooth manifold M;
  \item the principal right $H$ bundle $P$ over $M$;
  \item the $\mathfrak g$-valued 1-form $\omega$ on $P$, called the \emph{Cartan connection}, satisfying
    \begin{enumerate}
    \item for each point $p\in P$, the linear map $\omega_{p}:T_{p}(P)\rightarrow\mathfrak g$ is an isomorphism;
    \item $(R_{h})^{*}\omega=\Ad(h^{-1})\omega$ for all $h\in H$ where $R$ denotes the right action of a group on itself and $\Ad$ denotes the adjoint action;
    \item $\omega(X^{\dagger})=X$ for all $X\in\mathfrak h$ where $X^{\dagger}$ is the vector field on $P$ generated by the infinitesimal group action $X\in\mathfrak g$.
    \end{enumerate}
  \end{itemize}
The $\mathfrak g$-valued 2-form $\Omega=d\omega+\tfrac{1}{2}[\omega\wedge\omega]$ is called the \emph{curvature}. If $\Omega$ takes value in the subalgebra $\mathfrak h$, then the geometry is called \emph{torsion free}. if $\Omega=0$, then the geometry is \emph{flat}.
\end{dfn}

When a geometry is flat, then $P\cong G$ and $\omega$ is the left-invariant Maurer-Cartan form on $G$. Even in the non-flat case, condition 3 above shows that restricted to each fibre $\omega$ is still the Maurer-Cartan form on $H$. The curvature measures the extent of the deformation of $P$ from the Lie group $G$. In the flat case, $M\cong G/H$ is a homogeneous space.

We will often need the notion of basic and semibasic forms. Let $H\rightarrow E\xrightarrow{\pi}M$ be a fibre bundle over $M$ with group action $H$. Then for the space of forms we have $\pi^{*}:\Lambda^{p}(M)\hookrightarrow\Lambda^{p}(E)$. For $\omega\in\Lambda^{p}(E)$, $\omega$ is \emph{basic} if it lies in the image of $\pi^{*}$, whereas it is \emph{semi-basic} if $\omega(V_{1},\dots,V_{n})=0$ whenever $V_{1}$ is tangential to a fibre. An easy criteria for a form to be basic is that it is semi-basic and right $H$-invariant.

\subsection{Examples: Riemannian geometry and conformal geometry}
\label{sec:exampl-riem-geom}

We now have all the abstract machinery we will need from Cartan's geometry. Next we will give two key examples of how to apply this programme. These examples will be used later when we construct immersions and submersions on it.

\begin{ex}
  [Riemannian geometry in the Cartan framework]
Let us see how we can capture the information contained in a Riemannian metric in the Cartan framework. We will take the model to be
\begin{align}
  \label{eq:1}
  G&=\left\{
    \begin{pmatrix}
      1&0\\
      V&A
    \end{pmatrix}\in M_{n+1}(\rs)\mid A\in O(n)
\right\}\\
  H&=\left\{
    \begin{pmatrix}
      1&0\\
      0&A
    \end{pmatrix}\in M_{n+1}(\rs)\mid A\in O(n)
\right\}\\
  \mathfrak g &=\left\{
    \begin{pmatrix}
      0&0\\
      v&a
    \end{pmatrix}\in M_{n+1}(\rs)\mid A\in \mathfrak{o}(n)
\right\}\\
  \mathfrak h &=\left\{
    \begin{pmatrix}
      0&0\\
      0&a
    \end{pmatrix}\in M_{n+1}(\rs)\mid A\in \mathfrak{o}(n)
\right\}
\end{align}
i.e., for the Lie algebra part, $a+a^{t}=0$. Using the form of the Lie algebra, the Cartan connection and curvature are
\begin{align}
  \label{eq:3}
    \omega&=
  \begin{pmatrix}
    0&0\\
    \theta&\alpha
  \end{pmatrix}\\
  \Omega&=
  \begin{pmatrix}
    0&0\\
    d\theta+\alpha\wedge\theta&d\alpha+\alpha\wedge\alpha
  \end{pmatrix}
\end{align}
The forms $\theta$ are basic, whereas the forms $\alpha$ are tangential to the fibre. We also see that the torsion free condition is just the usual ``Cartan's first structural equations'':
\begin{equation}
  \label{eq:4}
  d\theta=-\alpha\wedge\theta
\end{equation}
or, if we choose a suitable basis for $\mathfrak g$ and put on some indices (Einstein's summation convention applies),
\begin{equation}
  \label{eq:5}
  d\theta^{i}=-\alpha^{i}{}_{j}\wedge\theta^{j}.
\end{equation}
Therefore the data captured by a Riemannian metric is also captured by the 1-forms $\theta^{i}$ and $\alpha^{i}{}_{j}$ satisfying the torsion-free condition \eqref{eq:5}. For flat space (Euclidean space), we of course also have
\begin{equation}
  \label{eq:6}
  d\alpha^{i}{}_{j}=-\alpha^{i}{}_{k}\wedge\alpha^{k}{}_{j}.
\end{equation}

\end{ex}
\begin{ex}[Conformal geometry in the Cartan framework]
A conformal structure on a manifold $M$ is an equivalence class of Riemannian structure on $M$: two Riemannian metrics $g$ and $g'$ belongs to the same class if and only if there is a function $\Lambda:M\rightarrow \rs$ such that $g'=\Lambda g$. We need to encode this information into a linear Lie group acting on the tangent space, and it is fairly complicated, as shown below.

We will use the M\"obius model: let $\mathbb{L}=\rs^{n+2}$ be equipped with the indefinite metric
\begin{equation}
  \label{eq:42}
  \Sigma_{n+1,1}=
  \begin{pmatrix}
    0&0&-1\\
    0&I_{n}&0\\
    -1&0&0
  \end{pmatrix}
\end{equation}
 this is the so-called light-cone version of the metric. Without regard of the metric, we can form the projective space $\mathbb{P}(\mathbb{L})$ of $\mathbb{L}$, i.e., two points are identified if and only if they lie on the same straight line through the origin. Using the metric now, we define the \emph{M\"obius $n$-space} $M_{0}$ to be the set of lightlike points in $\mathbb{P}(\mathbb{L})$. Note that the affine map $\rs^{n+1}\rightarrow\mathbb{L}$
\begin{equation}
  \label{eq:43}
  (y_{0},y_{1},\dots,y_{n})\mapsto\left(\frac{1}{\sqrt{2}}(1+y_{0}),y_{1},\dots,y_{n},\frac{1}{\sqrt{2}}(1-y_{0})\right)
\end{equation}
induces a diffeomorphism between the $n$-sphere and the M\"obius $n$-space, hence also the terminology \emph{M\"obius $n$-sphere}. 

The Lorentz group is the symmetry group of the metric $\Sigma_{n+1,1}$ fixing the origin, and the symmetry group of the M\"obius $n$-sphere is a subgroup.  The kernel of the group action of Lorentz group on $M_{0}$ is $\{\pm I\}$. For the isotropy subgroup, we will take the one fixing the north pole of the M\"obius sphere. Hence we will take as our model the \emph{M\"obius model}
\begin{align}
  \label{eq:44}
  G&=O(n+1,1)/\{\pm 1\}\\
  H&=\{h\in G\mid h[e_{0}]=[e_{0}]\},\qquad\text{$[e_{0}]$ refers to the north pole.}
\end{align}
The  matrix representation of $H$ is given by
\begin{equation}
  \label{eq:45}
  H=\left\{
    \begin{pmatrix}
      z&0&0\\
      0&a&0\\
      0&0&z^{-1}
    \end{pmatrix}
    \begin{pmatrix}
      1&q&r\\
      0&1&q^{t}\\
      0&0&1
    \end{pmatrix}
    \mid
    z\in\rs^{+},a\in O(n),q^{t}\in\rs^{n},r=\frac{1}{2}qq^{t}.
\right\}
\end{equation}
and for the Lie algebra pair $(\mathfrak g,\mathfrak h)$, we will take
\begin{align}
  \label{eq:46}
  \mathfrak g&=\left\{
  \begin{pmatrix}
    z&q&0\\
    p&s&q^{t}\\
    0&p^{t}&-z
  \end{pmatrix}
  \mid s^{t}=s\right\}
\\
  \mathfrak h&=
  \left\{
  \begin{pmatrix}
    z&q&0\\
    0&s&q^{t}\\
    0&0&-z
  \end{pmatrix}
  \mid s^{t}=s\right\}
\end{align}
Note that the adjoint action of $h\in H$ on $p\in\mathfrak g/\mathfrak h$ is
\begin{equation}
  \label{eq:52}
  \Ad(h)p=sz^{-1}p
\end{equation}
as expected, i.e., rotation and scaling.

With $(\mathfrak g,\mathfrak h)$ and $H$, we are now able to define a Cartan geometry modelled on M\"obius $n$-sphere. We will denote the Cartan connection by
\begin{equation}
  \label{eq:47}
  \omega=
  \begin{pmatrix}
    \epsilon&\nu&0\\
    \theta&\alpha&\nu^{t}\\
    0&\theta^{t}&-\epsilon
  \end{pmatrix}
\end{equation}
note that the forms $\theta$ plays the same role as the Riemannian $\theta$: they are basic for the canonical projection of the bundle. The curvature is
\begin{align}
  \label{eq:49}
  \Omega&=
  \begin{pmatrix}
    E&Y&0\\
    \Theta&A&Y^{t}\\
    0&-\Theta^{t}&-E
  \end{pmatrix}\\
  &=
  \begin{pmatrix}
    d\epsilon+\nu\wedge\theta&d\nu+\epsilon\wedge\nu+\nu\wedge\alpha&0\\
    d\theta+\theta\wedge\epsilon+\alpha\wedge\theta&d\alpha+\theta\wedge\nu+\alpha\wedge\alpha+\nu^{t}\wedge\theta^{t}&\star\\
    0&\star&\star
  \end{pmatrix}
\end{align}
The $A$ block of the curvature plays the same role as the torsion-free part of the Riemannian curvature. Call this part of the Lie algebra the $\mathfrak{s}$-block. We also define the Ricci homomorphism, the abstract version of the Ricci contraction of the Riemannian tensor,  in the present context defined by
\begin{align}
  \label{eq:50}
  \Ricci:\hom(\Lambda^{2}(\mathfrak g/\mathfrak h),\mathfrak s)&\cong\Lambda^{2}(\mathfrak g/\mathfrak h)^{*}\otimes\mathfrak s\\
  &\xrightarrow{id\otimes\ad}\Lambda^{2}(\mathfrak g/\mathfrak h)^{*}\otimes\mathfrak g/\mathfrak h\otimes(\mathfrak g\otimes\mathfrak h)^{*}\\
  &\xrightarrow{\text{contract}\otimes\ad}(\mathfrak g/\mathfrak h)^{*}\otimes(\mathfrak g/\mathfrak h)^{*}
\end{align}
In the above, $\ad$ is the adjoint action of $\mathfrak s$ on $\mathfrak g/\mathfrak h$. This is the usual interpretation of $\mathfrak s$ as a matrix of transformations. The last step is the map
\begin{equation}
  \label{eq:51}
  t^{*}\wedge u^{*}\otimes v\otimes w^{*}\mapsto (u^{*}(v)t^{*}-t^{*}(v)u^{*})\otimes w^{*}
\end{equation}
 If the curvature blocks $E=0$ (no scaling curvature), $\Theta=0$ (no torsion), and $A$ is in the kernel of $\Ricci$, then the M\"obius geometry is called \emph{normal}. \emph{Normal M\"obius geometries are in 1-1 correspondence with conformal metrics on a manifold}. This completes the formulation of conformal geometry in the Cartan language.

\end{ex}

Let us remark that the term \emph{Cartan geometry} does not stand for a single, concrete model of geometry, as in the case of Euclidean geometry. Rather, it is an abstract geometrical language for describing other geometries. In particular, we should note that though in a concrete problems the Cartan connection $\omega$ can be given by an explicit expression, it is more useful in the Cartan framework to first study the properties of it only abstractly, without any particular expressions attached, just as we can study the properties of a Riemannian metric in Riemannian geometry without giving it any particular expression. The Cartan connection is the central concept for Cartan geometry, as it encodes all data about the geometry.

\subsection{Lifting and the method of equivalence}
\label{sec:lift-into-princ}
In fact, the theory of Cartan geometry arises as an application of Cartan's algorithm for solving equivalence problems. However, the method of equivalence is more general for solving a large class of problems in differential systems. The method of equivalence is very powerful due to its algorithmic nature, and once a problem can be formulated in its form a solution is sure to be found, though the computation may be very complicated so that the use of computer is required.

We shall now briefly outline the method of equivalence so as to make a more computational interpretation of our later manipulations possible. We will gloss over a large number of technical issues, as for our need  just  keeping a simple picture in mind is sufficient, especially when we discuss why the structure-preserving submersion problem \emph{cannot} be solved in section \ref{sec:equiv-meth-fails}. For further details of the method, see \cite{LBryant:1991p8950,Fels:1998p7335,Fels:1999p7361}.

Basically, the method of equivalence is concerned with the following question: given two manifolds $M$ and $N$, each with a set of differential 1-forms $\theta_{M}^{i}$ and $\theta_{N}^{i}$ on it, does that exist a diffeomorphism $\Phi:M\rightarrow N$ such that
\begin{equation}
  \label{eq:78}
  \Phi^{*}\theta_{N}^{i}=g^{i}{}_{j}\theta_{M}^{j},
\end{equation}
 where $g^{i}{}_{j}\in G$ is a specified Lie group acting linearly on the column vector of 1-forms? It is neither assumed that the 1-forms are linearly independent nor that they span the cotangent space, though in the case where they do not span the cotangent space we need to complement them with other forms. The 1-forms encode the geometrical structure in the problem with respect to the cotangent space, and the Lie group $G$ is simply the group that preserves this structure. For example, for Riemannian geometry \emph{an} orthonormal frame encodes the same amount of information as the metric: we can write $g=\sum\theta^{i}\otimes\theta^{i}$ for some one forms $\theta^{i}$, using the Gram-Schmidt algorithm. The group $G$ is then simply the orthogonal group acting on the cotangent space in the usual way, and formulated in this way, this problem is the same as finding whether two manifolds as ``the same'' within the realm of Riemannian geometry, the problem Riemann first studied. However, \eqref{eq:78} is not symmetric in $M$ and $N$, and it is cumbersome to have elements of $G$ explicitly around, so we use the following method: we ``lift'' the problem into the space $M\times G$ and $N\times G$, and define the lifted 1-forms by $\omega^{i}_{M}=g\cdot\pi_{M}^{*}\theta^{i}_{M}$ and $\omega^{i}_{N}=g\cdot\pi_{N}^{*}\theta^{i}_{N}$, where $\pi_{M}:M\times G\rightarrow M$ is the projection and similarly for $N$. It can then be proved that the original problem \eqref{eq:78} is satisfied if and only if there exists diffeomorphism $\tilde \Phi:M\times G\rightarrow N\times G$ such that $\tilde\Phi^{*}\omega_{N}^{i}=\omega_{M}^{i}$. Now we can see how the principal bundle in Cartan geometry arises in this way. The case of Cartan connection arises also from this procedure: if the original 1-forms \emph{are} a coframe, then $\omega_{M}^{i}$ etc., already form a basis of basic 1-forms. The vertical 1-forms arise by considering the differentials:
 \begin{align}
   \label{eq:79}
   d\omega^{i}&=dg\wedge\theta^{i}+g\cdot d\theta^{i}\\
   &=dg\, g^{-1}\wedge g\cdot\theta^{i}+\gamma^{i}{}_{jk}\theta^{j}\wedge\theta^{k}\\
   &=dg\, g^{-1}\wedge \omega^{i}+\Gamma^{i}{}_{jk}\omega^{j}\wedge\omega^{k}
 \end{align}
now $dg\,g^{-1}$ is the Maurer-Cartan form in the fibre direction, thus we have a basis of 1-forms on the lifted manifold. The functions $\Gamma^{i}{}_{jk}$ are called \emph{torsion} and depends explicitly on the group elements. It is possible to absorb some or all of the torsion by redefining the  Maurer-Cartan forms by adding linear combinations of $\omega^{i}$ while still keeping them Lie algebra-valued. The aim is to make the torsion independent of the group elements by a series of such absorptions, and then the remaining essential torsions are just differential invariants of the problem, and their functional relationships determine whether the original two sets of 1-forms are equivalent. In the case of geometrical coframes, after absorption we have the Cartan connection.

\section{Structure-preserving maps}
\label{sec:subm-immers-topol}

We start by recalling the definition of immersion and submersion in differential \emph{topology}:

\begin{dfn}
  A smooth map $f\rightarrow M^{m}\rightarrow N^{n}$ with constant rank $r$ is called an \emph{immersion} if $r=m$, and a \emph{submersion} if $r=n$.
\end{dfn}

For an immersion or a submersion, an easy application of the implicit function theorem yields the following well-known-result:

\begin{prop}
  Let $f\rightarrow M^{m}\rightarrow N^{n}$ be an immersion (resp.~submersion). Then for each point $p\in M$ there are coordinate systems $(U,\varphi),(V,\psi)$ about $p$ and $f(p)$, respectively, that the composite $\psi f \varphi^{-1}$ is a restriction of the coordinate inclusion $\imath : \rs^{m}\rightarrow\rs^{m}\times\rs^{m-n}$ (resp.~a restriction of the coordinate projection $\pi: \rs^{n}\times \rs^{m-n}\rightarrow\rs^{n}$).
\end{prop}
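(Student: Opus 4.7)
The plan is to reduce both the immersion and submersion cases to a single application of the inverse function theorem, differing only in which of the two ambient charts gets modified. First I would pick arbitrary smooth charts $(U_{0},\varphi_{0})$ about $p$ with $\varphi_{0}(p)=0$ and $(V_{0},\psi_{0})$ about $f(p)$ with $\psi_{0}(f(p))=0$, and work with the local representative $\tilde f=\psi_{0}\circ f\circ\varphi_{0}^{-1}$, which is a smooth map between open neighbourhoods of the origin in $\rs^{m}$ and $\rs^{n}$ whose Jacobian at $0$ has constant rank $r$. The content of the theorem is then to straighten $\tilde f$ by adjusting just one of the two charts.

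For the immersion case ($r=m$), after permuting the target coordinates I may assume the upper $m\times m$ block of $d\tilde f|_{0}$ is invertible. I would then introduce the auxiliary map $F:\tilde U\times\rs^{n-m}\rightarrow\rs^{n}$ defined by $F(x,y)=\tilde f(x)+(0,y)$, where the second summand is understood to pad zeros into the last $n-m$ slots. Its Jacobian at $(0,0)$ is block triangular, with an invertible $m\times m$ block on top and the identity $(n-m)\times(n-m)$ block below; the inverse function theorem then supplies a smooth local inverse $\Phi$ near $0\in\rs^{n}$. Taking the new target chart to be $\psi=\Phi\circ\psi_{0}$ (after shrinking domains as needed) and keeping $\varphi=\varphi_{0}$, a direct computation gives $\psi\circ f\circ\varphi^{-1}(x)=(x,0)$, which is the required restriction of the coordinate inclusion.

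The submersion case ($r=n$) is dual: after permuting the source coordinates, the leftmost $n\times n$ block of $d\tilde f|_{0}$ is invertible. Here I would define $F:\tilde U\rightarrow\rs^{n}\times\rs^{m-n}$ by $F(x^{1},\dots,x^{m})=(\tilde f(x),x^{n+1},\dots,x^{m})$, whose Jacobian at $0$ is again block triangular with invertible diagonal blocks. The inverse function theorem furnishes a local diffeomorphism with inverse $\Phi$, and the new source chart $\varphi=F\circ\varphi_{0}$ (with $\psi=\psi_{0}$ unchanged) yields $\psi\circ f\circ\varphi^{-1}(y^{1},\dots,y^{m})=(y^{1},\dots,y^{n})$, the restriction of the coordinate projection.

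The main obstacle is not analytic but organisational: both arguments hinge on a single invocation of the inverse function theorem, so the only care required lies in assembling the auxiliary $F$ so that its differential has the right block-triangular shape, and then verifying that the ultimate composition $\psi\circ f\circ\varphi^{-1}$ genuinely equals $\imath$ or $\pi$ on an open neighbourhood rather than merely agreeing with it to first order at the base point. Shrinking $U$ and $V$ so that the formulas hold throughout the chosen charts is routine but worth spelling out, as is noting that the constant-rank hypothesis is only needed at the single point $p$ (not on a neighbourhood), since immersions and submersions are by definition constant-rank everywhere.
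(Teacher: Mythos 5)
Your proof is correct and is exactly the standard argument the paper has in mind: the paper offers no written proof, merely remarking that the result is ``an easy application of the implicit function theorem,'' and your two auxiliary maps $F$ with block-triangular Jacobians are the canonical way of carrying out that application via the inverse function theorem. The only discrepancy worth noting is in the statement itself rather than your argument: for the immersion case the target of the inclusion should read $\rs^{m}\times\rs^{n-m}$ (as your construction correctly produces), not $\rs^{m}\times\rs^{m-n}$.
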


This important property tells us two things. First, if we only care about local properties of an immersion or submersion, then the above proposition allows us to construct charts such that $f(M)$ in the case of immersion or $N$ in the case of a submersion is locally a topological manifold. Note that this may be true only locally: a map $f:\rs\rightarrow\rs^{2}$ given by $t\mapsto (t, at)$ with $a$ irrational, composed with the covering map $p:\rs\rightarrow T^{2}=\rs^{2}/\mathbb{Z}^{2}$ is an immersion, but $pf(\rs)$ is not globally a topological manifold since the image is dense in $T^{2}$. Nonetheless if we only focus on the local properties such subtleties will not bother us. The second thing this property tells us is the \emph{equivalence problem} of an immersion or submersion: given two local immersions (resp.~submersions) in a manifold, does there exist an \emph{admissible} transformation of the manifold such that the first immersion (resp.~submersion) can be transformed to coincide with the second? Here admissible transformations refer to all (local) diffeomorphisms of the manifold since we are working in the category of smooth manifolds. Since the property has given us coordinates charts in which all immersions or submersions look locally the same, the solution to the equivalence problem is

\begin{cor}
  In the category of smooth manifolds, all immersions (submersions) are locally equivalent, provided the dimensions of the relevant manifolds match.
\end{cor}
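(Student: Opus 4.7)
The plan is to read the corollary off as an essentially immediate consequence of the preceding normal-form proposition. Given two local immersions $f_{1}, f_{2} : M \to N$ around points $p_{1}, p_{2} \in M$, first I would reduce to a common target point: by composing with a translation in a chart on $N$ one may arrange $f_{1}(p_{1}) = f_{2}(p_{2}) = q$ for some $q \in N$, which is harmless since translations are local diffeomorphisms and therefore admissible.

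Next, apply the proposition separately to $f_{1}$ and $f_{2}$ at these chosen points. This produces chart pairs $(U_{i},\varphi_{i})$ on $M$ about $p_{i}$ and $(V_{i},\psi_{i})$ on $N$ about $q$, for $i=1,2$, such that $\psi_{i}\circ f_{i}\circ\varphi_{i}^{-1}=\imath$, where $\imath$ is the standard coordinate inclusion supplied by the proposition. Define the local diffeomorphisms $\Phi := \psi_{2}^{-1}\circ\psi_{1}$ on $N$ near $q$ and $\phi := \varphi_{2}^{-1}\circ\varphi_{1}$ on $M$ near $p_{1}$. Then unwinding the definitions gives
\begin{equation}
\Phi\circ f_{1} \;=\; \psi_{2}^{-1}\circ\psi_{1}\circ f_{1} \;=\; \psi_{2}^{-1}\circ\imath\circ\varphi_{1} \;=\; f_{2}\circ\varphi_{2}^{-1}\circ\varphi_{1} \;=\; f_{2}\circ\phi,
\end{equation}
which exhibits $\Phi$ as the sought admissible transformation of $N$ intertwining $f_{1}$ and $f_{2}$ (up to the reparametrisation $\phi$ of the source, itself admissible). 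Since in the smooth category ``admissible'' means merely ``local diffeomorphism'', and $\Phi$ is manifestly a composition of charts and chart inverses, no further verification is required.

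The submersion case goes through verbatim with $\imath$ replaced by the standard coordinate projection $\pi$ delivered by the other half of the proposition. The matching-of-dimensions hypothesis is used precisely to guarantee that the \emph{same} normal form applies to both $f_{1}$ and $f_{2}$, so that $\Phi$ and $\phi$ are well-defined maps between the appropriate chart domains.

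The main point is that there is no substantive obstacle: all the genuine content was already packed into the normal-form proposition, and the corollary is simply its repackaging as an equivalence statement. The only conceptual remark worth flagging --- and one that motivates the rest of the paper --- is that the triviality here rests entirely on the maximality of the admissible group in the smooth category; once a Cartan geometric structure is imposed and admissible transformations are restricted to those preserving $\Gamma$, the analogous equivalence problem becomes substantive, and extracting local invariants via the method of equivalence sketched in section \ref{sec:lift-into-princ} becomes unavoidable.
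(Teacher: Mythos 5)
Your argument is correct and is exactly the route the paper takes: the paper treats the corollary as an immediate consequence of the normal-form proposition ("the property has given us coordinate charts in which all immersions or submersions look locally the same"), and you have simply written out the chart compositions $\Phi=\psi_{2}^{-1}\circ\psi_{1}$ and $\phi=\varphi_{2}^{-1}\circ\varphi_{1}$ that the paper leaves implicit. The computation $\Phi\circ f_{1}=f_{2}\circ\phi$ checks out, and your closing remark about why this becomes nontrivial once admissible transformations are restricted matches the paper's own motivation for the sections that follow.
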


Note that the implicit function theorem is really \emph{implicit}: in general, there is no explicit algorithm telling us what the required transformation is. In our study that follows which deals with geometry rather than topology, we would want explicit algorithms whenever possible.

For many applications, the category of smooth manifolds is too general to be of any use. For example, we would certainly object to the statement that a sphere is locally the same as a plane in Euclidean space. The reason is that an Euclidean space has more structure (i.e., the Euclidean metric) to distinguish differences not seen by the smooth structure. By focusing on the extra structure, we have made the transition from the field of differential topology to differential geometry. In what follows, we shall investigate the equivalence problem of immersions and submersions in many differential geometrical settings. We will ask ourselves the following two questions:
\begin{itemize}
\item Given an immersion or submersion $f:M\rightarrow N$ where in the case of immersion $N$ has an extra geometrical structure (e.g.~a Riemannian metric) defined on it and in the case of submersion $M$ has an extra geometrical structure defined on it, when does it follow that $f(M)$ in the case of immersion or $N$ in the case of a submersion is also a manifold with extra geometrical structure?
\item In the case where we can define an extra geometrical structure on $f(M)$ or $N$, to solve the equivalence problem taking into consideration of the extra geometrical structures.
\end{itemize}
As we are not interested in any particular configuration but instead are concerned with the general problem, it is a tremendous advantage if we can work in an explicitly coordinate-independent way. Of course, this is now easy for us, as we already know how to apply Cartan's constructions to give geometrical structures to manifolds, as discussed in section \ref{sec:structures-manifold}.

\subsection{Structure-preserving immersions}
\label{sec:immers-subm}

Although this paper is mainly concerned with the problem of structure-preserving submersions, there are two reasons why we should discuss structure-preserving immersions here first. One reason is that submersion and immersion has many structural similarities, and when we study immersions we gain some experience of how to deal with certain quantities; a more important reason is that, as we will see, every submersion problem contains a large class of immersion problems.

To keep discussion short, we will just give two examples which should sufficiently illustrate the methods dealing with immersion problems. For the general theory, see \cite{Fels:1999p7361}.

\begin{ex}
[Riemannian immersion]
Suppose $N$ is equipped with a Riemannian geometry and we have an immersion $f:M\rightarrow N$. First we need to place our bundle in a normalised form. Using $R_{h}^{*}\omega=\Ad(h^{-1})\omega$, if we apply the action of
\begin{equation}
  \label{eq:7}
  h=
  \begin{pmatrix}
    1&0\\
    0&r
  \end{pmatrix}
\end{equation}
Then an easy calculation shows that $R_{h}^{*}\theta=r^{-1}\theta$, $r\in O(n)$. Therefore, we can use this action to align the first $m$ of the basic forms $\theta$ to be tangential to $f(M)$. Then, restricted to $f(M)$, the Cartan connection will be of the following form
\begin{equation}
  \label{eq:8}
  \omega=
  \begin{pmatrix}
    0&0&0\\
    \theta&\alpha&-\beta^{t}\\
    0&\beta&\gamma
  \end{pmatrix}
\end{equation}
where $\alpha$ is a skew $m\times m$ matrix of 1-forms and $\gamma$ is a skew $(n-m)\times(n-m)$ matrix of 1-forms. To keep the connection in this nice form, we are no longer allowed arbitrary right actions of $H$. Instead, now the allowed actions must be of the form
\begin{equation}
  \label{eq:9}
  h=
  \begin{pmatrix}
    1&0&0\\
    0&r&0\\
    0&0&s
  \end{pmatrix}
\end{equation}
where $r\in O(m)$, $s\in O(n)$. This is a \emph{reduction of the principal bundle $P$}. Let us still call the reduced bundle $P$. In the reduced bundle, $\alpha$ and $\gamma$ are still tangential to the fibre, but the tangential directions in which $\beta$ lives are now gone. Since $\beta$ derives from the Maurer-Cartan form on $H$, it now cannot have anything to do with $\alpha$ and $\gamma$. Since $\alpha$, $\beta$, $\gamma$ still constitute a basis of 1-forms on $P$, we must have $\beta$ a linear combination of the basic forms $\theta$. If we use lowercase Latin indices for the tangential directions and uppercase for the normal directions, we have
\begin{equation}
\label{eq:10}
\beta^{i}{}_{A}=M^{i}{}_{jA}\theta^{j}
\end{equation}
for \emph{functions} $M^{i}{}_{jA}$ on $P$. The torsion-free condition also has to be imposed. An easy calculation shows that this now amounts to the condition
\begin{equation}
  \label{eq:11}
  \beta\wedge\theta=0
\end{equation}
or, using \eqref{eq:10}
\begin{equation}
  \label{eq:12}
  M^{i}{}_{jA}\theta^{i}\wedge\theta^{j}=0
\end{equation}
from which we conclude that the functions $M^{i}{}_{jA}$ is symmetric in the tangential indices. To summarise, for an immersion $f:M\rightarrow N$ in Riemannian geometry, we have the following piece of data:
\begin{itemize}
\item the Cartan connection
  \begin{equation}
    \label{eq:13}
    \begin{pmatrix}
      0&0\\
      \theta&\alpha
    \end{pmatrix}
  \end{equation}
on $f(M)$, which determines the induced geometry on $M$ (and the induced Riemannian metric);
\item the Ehresmann connection $\gamma$ on $f(M)$, which determines the geometry that is normal to $f(M)$;
\item the gluing data $M^{i}{}_{jA}$ which ties together the tangential and normal part of $f(M)$.
\end{itemize}
We can of course dig deeper by studying interesting geometrical properties if the ambient space is Euclidean, and study the classification by asking the equivalence question. Since these are well-studied in the literature, we shall content with only sketching the above geometrical framework for beginning such investigations.

\end{ex}
\begin{ex}
  [Conformal immersion]
Let $N$ be a normal M\"obius geometry and $f:M\rightarrow N$ an immersion. As in the Riemannian case, the first step is using the adjoint action \eqref{eq:52} to reduce the principal bundle by aligning the first several $\theta$ with the submanifold and restrict to the submanifold. After this reduction, the Cartan connection becomes
\begin{equation}
  \label{eq:53}
  \omega=
  \begin{pmatrix}
    \epsilon&\nu&\mu&0\\
    \theta&\alpha&-\beta^{t}&\nu^{t}\\
    0&\beta&\gamma&\mu^{t}\\
    0&\theta^{t}&0&-\epsilon
  \end{pmatrix}
\end{equation}
the group $H$ has been reduced to
\begin{equation}
  \label{eq:54}
  H=\left\{
  \begin{pmatrix}
    1&p&q&s\\
    0&1&0&p^{t}\\
    0&0&1&q^{t}\\
    0&0&0&1
  \end{pmatrix}
  \begin{pmatrix}
    z&0&0&0\\
    0&a&0&0\\
    0&0&c&0\\
    0&0&0&z^{-1}
  \end{pmatrix}
\mid a\in O(m),c\in O(m-n),s=\frac{1}{2}(pp^{t}+qq^{t})
\right\}
\end{equation}
i.e., in terms of Lie algebra
\begin{equation}
  \label{eq:55}
  \begin{pmatrix}
    \star&\star&\star&0\\
    0&\star&0&\star\\
    0&0&\star&\star\\
    0&0&0&\star
  \end{pmatrix}
\end{equation}
As in the Riemannian case, after the reduction $\beta$ becomes a basic form. Hence we write
\begin{equation}
  \label{eq:56}
  \beta^{A}{}_{i}=M^{A}{}_{ij}\theta^{j}
\end{equation}
the torsion free condition again means that we need to ensure
\begin{equation}
  \label{eq:57}
  0=\beta\wedge\theta=M^{A}{}_{ij}\theta^{j}\wedge\theta^{i}
\end{equation}
i.e., $M^{A}{}_{ij}$ is symmetric in $i$ and $j$.

In the Riemannian case, we stopped here since those were all the \emph{structural} things we could do. But since M\"obius geometry has more freedom, now we can do still more. Taking $h$ as an element of $H$ of the form \eqref{eq:54}, we can verify that
\begin{equation}
  \label{eq:58}
  R_{h}^{*}M^{A}{}_{ij}=z^{-1}a^{-1}(c(M-qI))a=z^{-1}(a^{-1})^{i}{}_{k}(c^{A}{}_{B}(M^{B}{}_{kl}-q^{B}\delta_{kl}))a^{l}{}_{j}
\end{equation}
By choosing $q$, we can make $M$ trace-free: $\sum_{i}M^{A}{}_{ii}=0$. After this second reduction, the group becomes
\begin{equation}
  \label{eq:59}
    H=\left\{
  \begin{pmatrix}
    1&p&0&s\\
    0&1&0&p^{t}\\
    0&0&1&0\\
    0&0&0&1
  \end{pmatrix}
  \begin{pmatrix}
    z&0&0&0\\
    0&a&0&0\\
    0&0&c&0\\
    0&0&0&z^{-1}
  \end{pmatrix}
\mid a\in O(m),c\in O(m-n),s=\frac{1}{2}pp^{t}
\right\}
\end{equation}
i.e., in terms of Lie algebra,
\begin{equation}
  \label{eq:60}
  \begin{pmatrix}
    \star&\star&0&0\\
    0&\star&0&\star\\
    0&0&\star&0\\
    0&0&0&\star
  \end{pmatrix}
\end{equation}
The Cartan connection is still given by \eqref{eq:47}, but now in addition $\mu$ is semi-basic. We note here that the forms $\mu$ and $\nu$ are not very important since for $n\neq 2$ they are completely determined by the requirement of a normal geometry.
\end{ex}

Before studying submersions, let us remark here that immersions are much nicer than submersions. Given an immersion map and a geometrical structure on the range of the map, in most circumstances we can define a geometrical structure on the domain of the map for which the map is structure-preserving, and in a large number of cases this structure is even unique. Also, as in our last part of discussion of Riemannian immersions, it is possible to only consider the ``ambient geometry'', without reference to the ambient space at all and discard all information that is not relevant to the immersion.

\subsection{Anatomy of the structure-preserving submersion}
\label{sec:anat-struct-pres}

I hope the above examples of structure-preserving immersions should at least give a rough idea of how a structure-preserving submersion should look like. Now let us consider the following. Suppose $\sigma:M\rightarrow B$ is a submersion that can be described as ``structure-preserving''. Then \emph{both} $M$ and $B$ must have structure related to them. As we saw before, the structures are best considered in the principal bundle, hence we shall aim to find a map which covers $\sigma$ in the diagram below:
\begin{equation}
  \label{eq:40}
  \xymatrix{
M\times H\ar[r]^{\approx}\ar[rd]_{\text{proj}_{1}}&P\ar@{..>}[r]^{?}\ar[d]_{\pi_{M}}&Q\ar[d]^{\pi_{B}}&B\times H'\ar[l]_{\approx}\ar[ld]^{\text{proj}_{1}}\\
&M\ar[r]^{\sigma}&B&
}
\end{equation}
It is obvious that $H'$ needs to be a subgroup of $H$. Also, for this to be a covering map, the map we are looking for needs to be surjective. However, this implies there must be a surjective homomorphism from $H$ to its subgroup $H'$. In many applications where a structure-preserving submersion is obvious, the group $H$ is simple, and hence no such homomorphism can exist. Therefore we need to look for a more complicated setting.

Let us for the moment forget that we are dealing with \emph{structure-preserving} maps and focus only on the submersion. The submersion locally defines a foliation on the manifold $M$, and a subgroup $H_{L}\subset H$ can be defined that preserves this foliation. In other words, let $\theta^{a}$ and $\theta^{i}$ be a basis of 1-forms on the manifold which generates the Cartan connection in the principal bundle, and let $\theta^{i}$ define the Frobenius distribution defining the submersion. Then
\begin{equation}
  \label{eq:61}
  H_{L}=\{g\in H\mid g\cdot \omega^{i}\equiv0\mod \{\omega^{j}\},\forall i\}
\end{equation}
Geometrically, this amounts to a reduction of the principal bundle, as pictured below:
\begin{equation}
  \label{eq:48}
  \xymatrix{
M\times H\ar[r]^{\approx}\ar[rd]_{\text{proj}_{1}}&P\ar[d]_{\pi_{M}}&P'\ar[l]_{\rho}\ar[d]^{\pi'_{M}}&M\times H_{L}\ar[l]_{\approx}\ar[ld]^{\text{proj}_{1}}\\
&M&\ar[l]_{=}M&
}
\end{equation}
By considering its action on $\theta^{A}$, we can see that as a subgroup of $GL(n)$, the group $H_{L}$ has block upper-triangular structure:
\begin{equation}
  \label{eq:70}
  g\cdot
  \begin{pmatrix}
    \{\theta^{i}\}\\
    \{\theta^{A}\}
  \end{pmatrix}
=
\begin{pmatrix}
  g_{ij}&g_{iA}\\
  0&g_{AB}
\end{pmatrix}
\begin{pmatrix}
  \{\theta^{i}\}\\
  \{\theta^{A}\}
\end{pmatrix}.
\end{equation}
An important note: the above representation of the group should be understood in abstract terms, together with the labelling: as we saw in the case of conformal geometry, it is not necessarily true that the group action is always aligned in this way (though if we use bigger matrix and padding a sufficient number of zeros in the column vector, it is always possible to do so, and the way to do it should be clear from the problem at hand, as in the case of conformal geometry).

Since we must have a submersion before we can talk about structure-preservation, it makes sense that we take this as our starting point. Therefore, we look for the map indicated by $\tilde\sigma$ below:
\begin{equation}
  \label{eq:71}
  \xymatrix{
M\times H\ar[d]^{\approx}&\ar[l]_{\rho}M\times H_{L}\ar[d]^{\approx}\ar[r]^{\tilde\sigma}& B\times H'\ar[d]^{\approx}\\
P\ar[d]^{\pi_{M}}&\ar[l]_{\rho}P'\ar[d]^{\pi'_{M}}\ar[r]^{\tilde\sigma}&Q\ar[d]^{\pi_{B}}\\
M&\ar[l]_{=}M\ar[r]^{\sigma}&B
}
\end{equation}
In the above diagram, we require all squares commute, all arrows to the left injective, and all arrows to the right surjective. Note that both $P$ and $Q$ are Cartan-geometries in their own right, and hence we can use the above maps to pull-back their Cartan connections to $P'$:
\begin{equation}\xymatrix@R=3pt{
P&\ar[l]_{\rho}P'\ar[r]^{\tilde\sigma}&Q\\
\omega_{P}\ar@{|->}[r]&\rho^{*}\omega_{P}&\\
&\tilde\sigma^{*}\omega_{Q}&\ar@{|->}[l]\omega_{Q}.
}
\end{equation}
We now want to find a basis of 1-forms on $P'$. This should also give us the constraints on the map $\tilde\sigma$. The pulled-back forms on $P'$ are (we will omit some pullback signs when this should cause no confusion of on which space we are currently working):
\begin{equation}
  \label{eq:73}
  \begin{array}{rcccccc}
  \rho^{*}\omega_{P}:&\theta_{P}^{A}&\theta_{P}^{i}&\omega_{P}^{AB}&\omega_{P}^{ij}&\omega_{P}^{Ai}&\omega_{P}^{iA}\\
  \tilde\sigma^{*}\omega_{Q}:&&\theta_{Q}^{i}&&\omega_{Q}^{ij}&&
\end{array}
\end{equation}
It is obvious that the forms $\rho^{*}\omega_{P}$ span $P'$, whereas the forms $\tilde\sigma^{*}\omega_{Q}$ are linearly independent on $P$. Moreover, as both $\theta^{i}_{P}$ and $\theta^{i}_{Q}$ are basic with respect to projection to $M$, they span the same subspace of the tangent bundle of $P$. However, there is an important difference between them.

Suppose we choose a point $q\in Q$, and consider $\tilde L(q)=\tilde\sigma^{-1} (q)$. This set can be given a submanifold structure, of dimension $\dim H_{L}-\dim H'+\dim H-\dim B$. It can even be given a $H'$ bundle structure. Both $\theta^{i}_{P}(q)$ and $\theta^{i}_{Q}(q)$ are 1-forms on this submanifold, but the ambiguity in the form $\theta^{i}{}_{Q}$ is a right action of $H'$ (we can still permute the 1-forms), whereas the ambiguity in $\theta^{i}_{P}$ is $H'\times M/B$ (we can apply the group action independently at each point above $L=M/B$, in contrast to the first case). 

On the bundle $\tilde L(q)$ we now use the right $H'$ action to align all the $\theta^{i}_{P}$ with $\theta^{i}_{Q}$. Depending on the group $H'$, it is conceivable that this might not always be possible, but we will only consider cases where this is possible since otherwise it is clear that such a submersion cannot be legitimately called ``structure-preserving''. After this is done for $\tilde L(q)$ for all $q\in Q$, we have a situation that is quite similar to, but not the same as a reduction of principal bundle on $P'$. If we refer to \eqref{eq:70}, the subgroup consisting of elements $g_{ij}$ still acts on $P'$, but once this action is chosen on a point $x\in P'$, it is determined for all $y\in\tilde L(\tilde\sigma(x))$. The rest of the group elements do not have this restriction. Let us call this situation a \emph{semi-reduction} of the principal bundle.

An important simplification occurs where $g_{iA}=0$ in \eqref{eq:70}. This occurs frequently, as we shall see, since the group $H\subset GL(n)$ can have symmetries that force it to be zero due to the lower left entry being zero.

As we have hinted several times before, the connection components $\theta^{A}_{P}$ and $\omega^{AB}_{P}$ when restricted to a single leaf $\tilde L(q)$, form the Cartan connection on the leaf. For example, in the Riemannian case this reduces to equation \ref{eq:8}, where we just set the transversal basic forms to zero and all our techniques for dealing with immersions apply.

Let us return to \eqref{eq:73}. Now we have $\theta^{i}_{P}=\theta^{i}_{Q}$. First note that on $P'$, the form $\omega^{iA}_{P}$ are \emph{basic}, and if the special situation where $g_{iA}=0$ occurs, $\omega^{Ai}_{P}$ are basic as well. The best way to see this is to ponder the relationship between the Lie algebra and the group. Hence, we can write
\begin{equation}
  \label{eq:75}
  \omega^{Ai}_{P}=K^{Ai}{}_{B}\omega^{B}_{P}+M^{Ai}{}_{j}\omega^{j}{}_{Q}.
\end{equation}
Note that $K^{Ai}{}_{B}$ and $M^{Ai}{}_{j}$ are ordinary functions on the bundle $P'$. They can be considered tensor quantities on $M$, but since it is much easier to work with ordinary functions than tensors, we shall always do calculations on $P'$ and not on the base.
If $\omega^{Ai}{}_{P}$ are basic as well, there is another set which is symmetric with respect to \eqref{eq:75} in an obvious way, for example, orthogonal group would give $\omega^{Ai}_{P}=-\omega^{iA}{}_{P}$.

Considering from the point of view of giving the basic forms $\theta^{i}_{Q}$ and $\theta^{A}_{P}$ and then determining whether the distribution given by $\theta^{i}_{Q}$ really gives a distribution, Frobenius theorem tells us that this is the case if and only if
\begin{equation}
  \label{eq:76}
  d\theta^{i}_{Q}\equiv\theta^{A}_{P}\wedge\omega^{A}{}_{iP}\equiv0\mod\{\theta^{i}_{Q}\}.
\end{equation}
This is the \emph{integrability condition} that we need to enforce. If $\theta^{iA}_{P}$ and $\theta^{Ai}_{P}$ are linearly dependent, this will reduce to a condition on $K^{Ai}{}_{B}$.

Using $R^{*}_{h}(\omega)=\Ad(h^{-1})\omega$, we can verify that after the semi-reduction the 1-forms $\omega^{ij}_{P}$ also has no ambiguity left along each $\tilde L$. However, it is not necessarily true that $\omega^{ij}_{P}=\omega^{ij}_{Q}$. In fact, we can calculate $d\theta^{i}_{P}$ and $d\theta^{i}_{Q}$ on $P$ and $Q$ separately, and then use $\theta^{i}_{P}=\theta^{i}_{Q}$ to compare the results. We have
\begin{equation}
  \label{eq:77}
  \omega^{ij}_{P}=\omega^{ij}_{Q}-M^{A}{}_{ij}\theta^{A}_{Q}
\end{equation}
However, we can think of both $\omega_{P}^{ij}$ and $\omega_{Q}^{ij}$ as $\mathfrak{h'}$-valued 1-forms where $\mathfrak{h'}$ denotes the Lie algebra of $H'$, and hence this shows that $M^{A}{}_{ij}\theta^{A}_{Q}$ is also a $\mathfrak{h'}$-valued 1-form. This is the \emph{Lie algebra compatibility condition}.

These are all the conditions that we can deduce from our motivation. We can now finally put everything together and give our definition for structure-preserving submersion:

\begin{dfn}
  [Structure-preserving submersion] Let $B$ and $M$ be Cartan geometries with group $H'$ and $H$, respectively. A submersion $\sigma:M\rightarrow B$ is called \emph{structure-preserving} if after reduction of the principal bundle of $M$ adapted to the submersion to $P'$, we can find function $M^{A}{}_{ij}$ and $K^{Ai}{}_{B}$ on $P'$ such that, after right-$H_{L}$ action if needed, on the bundle we have
  \begin{itemize}
  \item $\theta^{i}_{P}=\theta^{i}_{Q}$,
  \item $\omega^{ij}_{P}=\omega^{ij}_{Q}-M^{A}{}_{ij}\theta^{A}_{Q}$ (this implies the Lie algebra-compatibility condition),
  \item $\omega^{Ai}_{P}=K^{Ai}{}_{B}\omega^{Q}_{P}+M^{Ai}{}_{j}\omega^{j}_{Q}$ (this implies the integrability condition, since we started with a submersion),
  \end{itemize}
\end{dfn}

Note that we specified the geometries on $B$ and $M$ independently. \emph{It is important to specify at least what class of geometry they each belong to}, for example, in the Riemannian case if a structure-preserving submersion exists then it can be interpreted uniquely in our above framework. However, if we allow $B$ to be non-torsion-free, then this is no longer the case: $M^{A}{}_{ij}$ can now take \emph{any} value subject to Lie-algebra compatibility, as any value different from \eqref{eq:77} would now contribute to torsion in the geometry $B$.

Although we started with the space $P$ and derived what a ``structure-preserving submersion'' should look like, making the space $Q$ look derivative, in our definition the centre stage is not given to $P$. This is deliberate: if we look at the definition of a structure-preserving immersion, i.e., ambient geometry in section \ref{sec:immers-subm}, we see that the centre stage is given to what the geometrical construction, i.e., the immersion, can ``see'' and we omitted altogether those parts unavailable to $Q$. Here we do the same thing, as this way of definition is more general. In addition, later in section \ref{sec:anoth-view-struct} we shall see that there are cases where $Q$ instead of $P$ is considered more immediately available to us.

Now, as usual, we will give examples of how this construction realises in the case of Riemannian and conformal geometries.

\begin{ex}
  [Riemannian submersion]

Now let us return to the torsion-free Cartan geometry modelled on Euclidean geometry. Let $f:M\rightarrow N$ be a submersion. Using essentially the same reduction of the principal bundle $P$ as we did before, we now have the Cartan connection on $P$ over $M$ of the form
\begin{equation}
  \label{eq:14}
  \omega=
  \begin{pmatrix}
    0&0&0\\
    \theta&\alpha&-\beta^{t}\\
    \psi&\beta&\gamma
  \end{pmatrix}
\end{equation}
where $\alpha$ is $n\times n$, $\gamma$ is $(n-m)\times(n-m)$, i.e., the forms $\theta$ constitute a basis of basic forms on $N$. As before, after reduction the forms $\beta$ becomes basic, hence
\begin{equation}
  \label{eq:15}
  \beta^{A}{}_{i}=M^{A}{}_{ij}\theta^{j}+K^{A}{}_{Bi}\psi^{B}.
\end{equation}

Now the submersion we have is very specific: it endows the space $N$ with a Riemannian structure. First, let us now use the right action to align all the forms $\theta$ so that they are basic for the projection $\pi: M\rightarrow N$. After this is done, we are no longer allowed to apply the right action of the group element
\begin{equation}
  \label{eq:16}
  h=
  \begin{pmatrix}
    1&0&0\\
    0&r&0\\
    0&0&1
  \end{pmatrix}
\end{equation}
arbitrarily on the whole space $M$: $h$ can still be applied, but once it is applied to a point in a leaf, the same action must be applied to all points on the same leaf. In other words, the degree of freedom is now in $N$, not $M$.

For the Riemannian submersion, the space $N$ also has a torsion-free Cartan connection
\begin{equation}
  \label{eq:17}
  \begin{pmatrix}
    0&0\\
    \theta&\bar\alpha
  \end{pmatrix}
\end{equation}
notice we have used the same $\theta$ as in the space $M$. To be more precise, the $\theta$ on $M$ is now the pullback of the $\theta$ on $N$, but for brevity we omit the pullback signs. This should not cause any confusion. Calculating $d\theta$ in two ways, first on $N$, then on $M$, we obtain
\begin{equation}
  \label{eq:18}
  d\theta=\bar\alpha\wedge\theta=\alpha\wedge\theta-\beta^{t}\wedge\psi.
\end{equation}
Let us see what kind of restriction this places on our quantities. Expanding using \eqref{eq:15}, we have (recall: exterior derivatives commute with pullbacks)
\begin{align}
  \label{eq:19}
  \bar\alpha^{i}{}_{j}\wedge\theta^{j}&=\alpha^{i}{}_{j}\wedge\theta^{j}-M^{A}{}_{ij}\theta^{j}\wedge\psi^{A}-K^{A}{}_{Bi}\psi^{B}\wedge\psi^{A}\\
&=(\alpha^{i}{}_{j}+M^{A}{}_{ij}\psi^{A})\wedge\theta^{j}-K^{A}{}_{Bi}\psi^{B}\wedge\psi^{A}
\end{align}
comparing coefficients, we learn that $\alpha^{i}{}_{j}=\bar\alpha^{i}{}_{j}-M^{A}{}_{ij}\psi^{A}$, $M^{A}{}_{ij}$ is skew symmetric in $i$ and $j$ (due to the skew-symmetry of $\bar\alpha$), and $K^{A}{}_{Bi}$ is symmetric in $A$ and $B$. We remark that the redefinition of $\alpha$ by $\bar\alpha$ is the same technique used in the equivalence method for absorbing torsion. In fact, if we do not do this, the quantities $M^{A}{}_{ij}$ really corresponds to geometrical torsion of the space $B$.

Putting everything together, our Cartan connection is now
\begin{equation}
  \label{eq:22}
  \omega=
  \begin{pmatrix}
    0&0&0\\
    \theta&\bar\alpha-M\psi&-\theta^{t}M^{t}-\psi^{t}K^{t}\\
    \psi&M\theta+K\psi&\gamma
  \end{pmatrix}
\end{equation}
It should be noted that \eqref{eq:22} is a vast simplification from \eqref{eq:14} because only $\gamma$ and $\psi$ are not pullbacks of forms defined on $N$, and the functions $M$ and $K$ have quite strong symmetries. Let us also calculate the curvature
\begin{equation}
  \label{eq:23}
  \Omega=
  \begin{pmatrix}
    0&0&0\\
    (2,1)&(2,2)&\star\\
    (3,1)&(3,2)&(3,3)
  \end{pmatrix}
\end{equation}
where the independent entries are
\begin{align}
  \label{eq:24}
  (2,1)&=d\theta+\bar\alpha\wedge\theta-M\psi\wedge\theta-\theta^{t}M^{t}\wedge\psi(-K^{t}\psi^{t}\wedge\psi)=0\\
  \label{eq:x3.1}
  (3,1)&=d\psi+M\theta\wedge\theta+K\psi\wedge\theta+\gamma\wedge\psi=0\\\label{eq:x2.2}
  (2,2)&=d\bar\alpha-dM\wedge\psi-Md\psi+\bar\alpha\wedge\bar\alpha-\theta^{t}M^{t}\wedge M\theta\\
  &\quad-\psi^{t}K^{t}\wedge M\theta-\theta^{t}M^{t}\wedge K\psi+M\psi\wedge M\psi-\psi^{t}K^{t}\wedge K\psi\\
\label{eq:x3.3}
  (3,3)&=d\gamma+\gamma\wedge\gamma-M\theta\wedge M\theta-K\psi\wedge\theta^{t}M^{t}-M\theta\wedge\psi^{t}K^{t}-K\psi\wedge\psi^{t}K^{t}\\
  (3,2)&=dM\wedge\theta+Md\theta+dK\wedge\psi+Kd\psi\\
  &\quad +M\theta\wedge\bar\alpha+\gamma\wedge M\theta+K\psi\wedge\alpha+\gamma\wedge K\psi-M\theta\wedge M\psi-K\psi\wedge M\psi
\end{align}
the first two are set to zero because of the torsion-free requirement.
\end{ex}
\begin{ex}
  [Conformal submersions]
Now let us investigate the submersion problem. Again we first align the first $\theta$ to be tangential to the submersion, after which we have the Cartan connection
\begin{equation}
  \label{eq:62}
    \omega=
  \begin{pmatrix}
    \epsilon&\nu&\mu&0\\
    \theta&\alpha&-\beta^{t}&\nu^{t}\\
    \psi&\beta&\gamma&\mu^{t}\\
    0&\theta^{t}&\psi&-\epsilon
  \end{pmatrix}
\end{equation}
for the conformal submersion, the space $N$ also has a normal M\"obius geometry defined on it:
\begin{equation}
  \label{eq:63}
  \begin{pmatrix}
    \epsilon&\bar\nu&0\\
    \theta&\bar\alpha&\bar\nu^{t}\\
    0&\theta^{t}&-\epsilon
  \end{pmatrix}
\end{equation}
again we have aligned the $\theta$ on $M$ and on $N$. Note that using the group elements $p$, the $\epsilon$ on both space have also been made the same. This requires that the group degrees of freedom of the elements
\begin{equation}
  \label{eq:64}
  \begin{pmatrix}
    1&p&0&s\\
    0&1&0&p^{t}\\
    0&0&1&0\\
    0&0&0&1
  \end{pmatrix}
  \begin{pmatrix}
    z&0&0&0\\
    0&a&0&0\\
    0&0&1&0\\
    0&0&0&z^{-1}
  \end{pmatrix}
\end{equation}
are now only available on $N$, not on $M$ anymore. Once more,
\begin{equation}
  \label{eq:65}
  \bar\alpha\wedge\theta=\alpha\wedge\theta-\beta^{t}\wedge\psi
\end{equation}
writing
\begin{equation}
  \label{eq:66}
  \beta^{A}{}_{i}=M^{A}{}_{ij}\theta^{j}+K^{A}{}_{Bi}\psi^{B}
\end{equation}
we have the same restriction as in the Riemannian case: $M^{A}{}_{ij}$ is skew in $i$ and $j$, whereas $K^{A}{}_{Bi}$ is symmetric in $A$ and $B$.
In the conformal case, there is actually a hidden condition in this: consider the adjoint $H$ action by the element
\begin{equation}
  \label{eq:67}
  \begin{pmatrix}
    1&0&q&s\\
    0&1&0&0\\
    0&0&1&q^{t}\\
    0&0&0&1
  \end{pmatrix}
\end{equation}
the degree of freedom of which is still on the whole of $M$. We have
\begin{equation}
  \label{eq:68}
  \beta\mapsto \beta-\theta^{t}q,\qquad M^{A}{}_{ij}\mapsto M^{A}{}_{ij}-\delta_{ij}q^{A}
\end{equation}
hence, to maintain the antisymmetry of $M^{A}{}_{ij}$, we must set $q=0$.
Let us recap here: the degree of freedom on $N$ is \eqref{eq:64}, whereas the degree of freedom on $M$ is only
\begin{equation}
  \label{eq:69}
  \begin{pmatrix}
    1&0&0&0\\
    0&1&0&0\\
    0&0&c&0\\
    0&0&0&1
  \end{pmatrix}
\end{equation}
i.e., only rotation of the $\psi$ is allowed.

Now, exterior differentiate the forms, we can obtain similar constraints to those obtained in the Riemannian case. As we will not use them in subsequent discussions, we omit them here.
\end{ex}

\subsection{The inapplicability of the equivalence method}
\label{sec:equiv-meth-fails}
Suppose that we are explicitly given two structure-preserving submersions $M\rightarrow N$ and $M'\rightarrow N'$, and we ask the question that if they are equivalent. This can be solved by Cartan's equivalence method: first, we can confirm by calculation that both of the submersions are really structure-preserving. Then, using the method of equivalence, we can check that these two are equivalent as submersions. Then, by uniqueness, these two are equivalent as structure-preserving submersions.

This is all very good, but if the problem can be solved straightforwardly using the equivalence method, then why do we need to develop our method using a mixture of Cartan geometry and lifting, which now seems to complicate things? The answer lies in the fact that the above algorithm can only be applied when the structure-preserving submersions are giving explicitly, by giving expressions involving local coordinates, for example. Suppose that we want to prove some general properties of a certain class of structure-preserving submersions, for example, classification. Now if we apply the equivalence method for general submersions, as we do not have explicit expressions for the coframe there is in general no way that we can ensure or check that the submersion is structure-preserving under the framework of the method of equivalence.

The problem is that, as we have shown, though in the structure-preserving case we still have a coframe problem, the relations between the old and new coframes are no longer expressed by a simple Lie group transformation. Indeed, for any two coframes at two base points, we can still find an element of the Lie group that transforms the old coframe into the new one, but now this transformation depends on the base point in a non-trivial and non-local way: if we specify the transformation at one base point, \emph{parts, but not all,} of the transformation at some other points are determined, yet at another set of points they remain completely undetermined.

It is reasonable to ask if a straightforward extension to the method of equivalence could be used for the structure-preserving submersion problems. After all, we still have group actions, and an absorption procedure is still possible, if we separate the parts that can be applied to all points and the parts that can only be independently applied to a subset. Indeed this is possible, but the main problem is that this does now get us very far: the usual formulation of the method of equivalence has few, if any, constraints on the torsion elements. However, as we will see in section \ref{sec:riem-rigid-flow}, structure-preserving submersion questions are characterised by a large number of interconnected constraints on $M^{A}{}_{ij}$ and $K^{Ai}{}_{B}$. Thus the problem lies not so much in finding the correct ``frame'', but to get meaning from the messy constraints. In simple cases, for example, in the case where the codimension of the submersion is 1, as in \ref{sec:riem-rigid-flow}, it is possible to get results by elementary means. However, in yet more complicated cases a more efficient method for dealing with the constraints needs to be developed.

\subsection{Some examples}
\label{sec:some-examples}
Here we give several examples of how real, specific problems arising in physics can be fitted to our framework.
\begin{ex}
  [Pseudo-Riemannian submersion of codimension 2 in 4-dimensional spacetime] In using structure-preserving submersions, it is not necessary to start giving the Cartan geometry on the total space. Instead, we can specify some extra constraints and deduce certain things about the total space. Let us try giving the following conditions:
  \begin{enumerate}
  \item $M\rightarrow B$ is a structure-preserving submersion.
  \item $H=SO(3,1)$, $H'=SO(1,1)$.
  \item The Cartan connection on each $\tilde L$ satisfies the structural equation of the 2-sphere.
  \item The Cartan connection on $P$ lies in the kernel of the Ricci homomorphism.
  \end{enumerate}
Item 3 above just says that each leaf has the full rotational symmetry, whereas item 4 says that the total space is Einstein. It is shown in \cite{Alvarez:2007p4209} that the unique solution satisfying these conditions is the Schwarzschild solution for black holes. It is desirable to use our method to study the existence of other black holes that can be described as a structure-preserving submersion.
\end{ex}

\begin{ex}
  [String fluid]
Another problem that can be put into our framework in a straightforward fashion is the problem of string fluid. This arises, for example, in physics where the effective action for unstable D-branes reduces to the case where relativistic string fluid of moving electric flux lines, see \cite{Gibbons:2000p8621}. Using \eqref{eq:14} with a few change of signs we have a working model for relativity, as we have already used for the above example. The string itself is 1-dimensional, so its trajectory in spacetime (the leaves) has group $SO(1,1)$, corresponding to $\gamma$ in \eqref{eq:14}. $\bar\alpha$ is then the curvature of spacetime the string fluid ``sees''. We see that in our framework it is now possible to give meaning to ``a string fluid moving rigidly in spacetime'', and using \eqref{eq:24} etc., it is now possible to calculate properties.
\end{ex}

\begin{ex}
  [Rigid motion in Newtonian spacetime] We consider a 3-dimensional body moving in Newtonian spacetime. The relevant group is now the Galiean group acting on spacetime:
\begin{equation}
\label{lie1}
\begin{pmatrix}
  1\\
  t'\\
  \mathbf{x}'
\end{pmatrix}
=
\begin{pmatrix}
 1&0&\mathbf{0}^{T}\\
 t_{0}&1&\mathbf{0}^{T}\\
 \mathbf{x}_{0}&\mathbf{v}&R
\end{pmatrix}
\begin{pmatrix}
  1\\
  t\\
  \mathbf{x}
\end{pmatrix}
\end{equation}
where $R\in SO(3)$. The Cartan connection is,
\begin{equation}
  \label{eq:80}
  \omega=
  \begin{pmatrix}
    0&0&0\\
    \tau&0&0\\
    \xi&\nu&\rho
  \end{pmatrix}
\end{equation}
where $\rho\in\mathfrak{so}(3)$. Note that this is without doing the first reduction. Note also that, since in Newtonian mechanics the form $\tau$ is \emph{always} aligned with motion and this alignment is preserved by the Galilean group, \emph{this is also the form of the 1-forms after reduction}, i.e., the first reduction does nothing at all! Hence, in contrast to the previous two cases, in Newtonian spacetime rigidity places no constraints on the system whatsoever.
\end{ex}
We see in \eqref{eq:40} that in general it is impossible to construct a covering map directly from the bundle of $M$ to the bundle of $B$, but as the above example shows, in Newtonian spacetime, such a map is trivially available to us. The following generalisation of this observation is immediate.

\begin{prop}
  A structure-preserving submersion places no extra constraints on the system (i.e., $M^{A}{}_{ij}$ and $K^{Ai}{}_{B}$ vanish identically) if and only if in the original Lie algebra the block $\omega^{iA}$ vanish identically.
\end{prop}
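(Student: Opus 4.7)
The plan is to trace the chain of definitions that produce $M^{A}{}_{ij}$ and $K^{Ai}{}_{B}$ from the Cartan connection on $P'$, and observe that the only ingredient capable of making them non-trivial is precisely the Lie-algebra block $\omega^{iA}$. By the definition of a structure-preserving submersion, $M^{A}{}_{ij}$ arises from the relation $\omega^{ij}_{P}=\omega^{ij}_{Q}-M^{A}{}_{ij}\theta^{A}_{Q}$, obtained by computing $d\theta^{i}$ on $P$ and on $Q$ via the structural equation and matching them through $\theta^{i}_{P}=\theta^{i}_{Q}$; meanwhile $K^{Ai}{}_{B}$ and $M^{Ai}{}_{j}$ appear as the coefficients in the basic expansion $\omega^{Ai}_{P}=K^{Ai}{}_{B}\theta^{B}_{P}+M^{Ai}{}_{j}\theta^{j}_{Q}$, which is valid precisely because after the semi-reduction the form $\omega^{Ai}_{P}$ becomes semi-basic.

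For the $(\Leftarrow)$ direction, suppose the $\omega^{iA}$ block of $\mathfrak{g}$ is identically zero. Since $\mathfrak{h}\subset\mathfrak{g}$ inherits this vanishing, the right action $R_{h}^{*}\omega=\Ad(h^{-1})\omega$ of $H$ cannot send $\theta^{i}$ to anything with a $\theta^{A}$ component, so $H$ already preserves the Frobenius distribution $\{\theta^{i}=0\}$. By the defining property of $H_{L}$ in \eqref{eq:61} this gives $H_{L}=H$, so the semi-reduction is the identity and $P'=P$. On $P'=P$ the form $\omega^{Ai}_{P}$ therefore remains vertical (taking values in $\mathfrak{h}_{L}=\mathfrak{h}$), so its basic expansion is forced to be trivial: $K^{Ai}{}_{B}=M^{Ai}{}_{j}=0$. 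Substituting $\omega^{iA}_{P}=0$ into the structural equation for $d\theta^{i}$ and equating with the corresponding equation on $Q$ then collapses the comparison directly to $\omega^{ij}_{P}=\omega^{ij}_{Q}$, so $M^{A}{}_{ij}=0$ as well.

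For the $(\Rightarrow)$ direction I would argue by contrapositive using the flat model. If $\omega^{iA}$ is not identically zero in $\mathfrak{g}$, take the flat Cartan geometry $(P,\omega)=(G,\omega_{\mathrm{MC}})$ with the canonical submersion $G/H\to G/H_{L}$; by the preceding analysis $H_{L}$ is now a strict subgroup of $H$, the semi-reduction is non-trivial, and the complement of $\mathfrak{h}_{L}$ inside $\mathfrak{h}$ provides a genuine basic contribution to $\omega^{Ai}_{P}$ on $P'$. The basic expansion then necessarily has a non-zero $K^{Ai}{}_{B}$ or $M^{Ai}{}_{j}$, contradicting the identical vanishing of these constraints.

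The main obstacle is the bookkeeping in the $(\Leftarrow)$ direction: one has to confirm that the $d\theta^{i}$-comparison really produces the strong equality $\omega^{ij}_{P}=\omega^{ij}_{Q}$ rather than a weaker statement modulo absorption ambiguities, and that ``no reduction'' genuinely forces $\omega^{Ai}_{P}$ to be purely vertical. Both claims follow because the block-upper-triangular structure of $H_{L}$ in \eqref{eq:70} is inherited directly from the matrix block structure of $\mathfrak{g}$, so the proposition is essentially a matter of unwinding definitions --- consistent with the paper's description of it as ``immediate''.
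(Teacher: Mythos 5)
Your easy direction is essentially the paper's (``the only if part is clear''): when the block is absent from $\mathfrak g$, the $\mathfrak g$-valued form $\omega$ simply has no component there, so the coefficients $K^{Ai}{}_{B}$, $M^{Ai}{}_{j}$ and hence $M^{A}{}_{ij}$ in \eqref{eq:77} are vacuously zero. You could reach this in one line; the detour through ``$H_{L}=H$, hence $P'=P$, hence $\omega^{Ai}_{P}$ stays vertical'' is both unnecessary and shaky, since the reduction $H\to H_{L}$ is governed by the adjoint action on $\mathfrak g/\mathfrak h$ and is not in general equivalent to the vanishing of the single block named in the statement (the Galilean example already shows the two off-diagonal blocks can behave asymmetrically).

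The genuine gap is in your $(\Rightarrow)$ direction. First, the ``canonical submersion $G/H\to G/H_{L}$'' does not exist: since $H_{L}\subset H$, the natural map goes $G/H_{L}\to G/H$, and neither space is the base $B$ of the submersion; what you need is an actual foliation of $M=G/H$ with quotient $B$. Second, and more importantly, you infer non-vanishing of $K^{Ai}{}_{B}$ or $M^{Ai}{}_{j}$ from the mere non-triviality of the semi-reduction. That does not follow: the semi-reduction only makes $\omega^{Ai}_{P}$ \emph{semibasic}, and for particular submersions this semibasic form can still vanish identically --- a product (totally geodesic, non-rotating) foliation of Euclidean space has $M=K=0$ even though $\mathfrak o(n)$ has the full off-diagonal block. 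To close the argument one must actually \emph{exhibit} a geometry and a submersion for which the corresponding connection components are non-zero on $P'$ (e.g.\ a rotating rigid flow), which is exactly what the paper's proof does, however tersely: ``we can construct geometries where the Cartan connection corresponding to this part is non-zero \dots\ these equations yield the constraints.'' Your version presents this existence claim as if it were automatic from the Lie algebra alone, and that is the step that would fail under scrutiny.
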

\begin{proof}
  The ``only if'' part is clear. For the ``if'' part, if this block in the original Lie algebra does not vanish, we can construct geometries where the Cartan connection corresponding to this part is non-zero. After the first reduction, there are also frames where this part, now consisting of basic forms only, remains non-zero. These equations yield the constraints.
\end{proof}

\subsection{Another view of structure-preserving submersions}
\label{sec:anoth-view-struct}
In all the examples we see above, the space $M$ is considered more immediately available to us. The space $B$ often seem no more than a mathematical construct, as even when $M$ is flat, $B$ needs not be. Furthermore, in all applications except that of the black hole, the space $M$ is given to us together with its geometry. This needs not be the case: there are examples where $B$ is immediately available to us whereas $M$ is somehow hidden from view:
\begin{ex}
  [Kaluza-Klein] Let $M=\mathbb{M}\times S^{1}_{r}$, where $\mathbb{M}$ is 4-dimensional Minkowski spacetime and $S^{1}_{r}$ is the circle with radius $r$, each given the usual metric. The space $M$ is given the product metric. Then the projection $\pi:M\rightarrow\mathbb{M}$ is a structure-preserving submersion. This submersion corresponds to the usual dimensional reduction in Kaluza-Klein theory.
\end{ex}
By itself this example is not very interesting: after all locally $M$ is the same as 5-dimensional Minkowski spacetime, which we already studied before. However, if we go to the closely-related concept of gauge theory, especially Yang-Mills theory, we see new things popping up. For example, in \cite{Watson:1983} which studies Yang-Mills theory on $S^{4}$, the following diagram is relevant:
\begin{equation}
  \label{eq:72}
    \xymatrix{
&SU(2)\ar[r]^{\pi_{4}}\ar[d]&P_{1}(\cxs)\ar[d]\\
U(1)\ar[r]&S^{7}\ar[r]^{\pi_{2}}\ar[d]_{\pi_{1}}&P_{3}(\cxs)\ar[dl]^{\pi_{3}}\\
&S^{4}&
}
\end{equation}
where all $\pi_{i}$ are Riemannian submersions provided that each space in the diagram is equipped with the standard metric. However, this approach is unsatisfactory since it relies on the introduction of Riemannian metrics on the various spaces, while Yang-Mills theory itself is built from group-theoretic pieces. Hence it is desirable to study such problems using our general, group-theoretic structure-preserving submersions alone. This programme will be pursued in a subsequent paper.

\section{In-depth example: rigid Riemannian flow}
\label{sec:riemannian-geometry}

The case of codimension-1 Riemannian submersion is interesting in that under some additional mild conditions the geometry of the whole flow is rigidly determined. We shall now begin to derive these seemingly surprising results. For background information, see \cite{me:000,born1909,herglotz1910,noether1910,pirani1962,rayner1959,Wahlquist:1967p2556,Wahlquist:1966p2548,Estabrook:1964p2608,Giulini:2006p66}.

Up until now we have avoided doing massive computations. Since this is unavoidable now, we need a clear way to express our quantities instead of carrying exterior derivatives of forms around all the time. Here it is convenient to use covariant derivatives in the principal bundle. Covariant derivatives on the bundle can only be defined in cases where the vector space $\mathfrak g/\mathfrak h$ which is isomorphic to the tangent space at each point is invariant under the $\Ad(H)$ representation when considered as a Lie-module, meaning that the interpretation of the usual ``horizontal subspaces'' makes invariant sense. Since any tensor on the base manifold is interpreted as a vector-valued \emph{function} on the bundle where the vector space is a representation of the group $H$, given a ``direction'' $\mathbf{v}\in\mathfrak g/\mathfrak h$ the inverse of the Cartan connection $\omega$ maps this direction to a vector field on $P$: $\mathbf{v}^{\dagger}=\omega^{-1}(\mathbf{v})$. Then given a tensor, which we will write as
\begin{equation}
  \label{eq:30}
  T=T^{A}\otimes\mathbf{e}_{A}
\end{equation}
where $\mathbf{e}_{A}$ is a basis for the vector space $T$ is taking value in, the covariant derivative is then
\begin{equation}
  \label{eq:31}
  \nabla_{\mathbf{v}}T=\omega^{-1}(\mathbf{v})(T^{A})\otimes\mathbf{e}_{A}\equiv T^{A}{}_{;\mathbf{v}}\otimes\mathbf{e}_{A}
\end{equation}
a trick of calculating the covariant derivative of a function is to just calculate the exterior derivative of the scalar part to get a 1-form, and the terms involving the horizontal 1-forms are the covariant derivatives, e.g.:
\begin{equation}
  \label{eq:36}
  dT^{A}|_{\theta^{1}}\equiv \text{the terms in $dT^A$ involving $\theta^{1}$}=T^{A}{}_{;1}\theta^{1}.
\end{equation}

\subsection{Riemannian rigid flow in homogeneous space}
\label{sec:riem-rigid-flow}

This is the case where $\dim M-\dim N = 1$, which gives us vast simplification: now $\gamma = 0$ and (let us denote by  $0$ the only index taken by $A,B,\dots$)
\begin{equation}
  \label{eq:21}
  M^{0}{}_{ij}\equiv M_{ij},\qquad K^{0}{}_{0i}\equiv K_{i}.
\end{equation}

\begin{prop}
  For Riemannian rigid flow, if the ambient space is homogeneous, then $M$ is basic for the projection $\pi:M\rightarrow N$.
\end{prop}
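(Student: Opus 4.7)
The plan is to derive, from the vanishing of the ambient curvature forced by the homogeneity hypothesis, an equation that isolates the fibre derivative $M_{ij;0}$ and then to show that it must vanish. Throughout I use the convention of \eqref{eq:36}, writing $dM_{ij}\equiv M_{ij;k}\theta^{k}+M_{ij;0}\psi$ (and analogously for $K_{i}$) modulo the vertical directions of $P'$, so that $M_{ij;0}$ is the precise quantity to be shown to vanish.

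First I would compute block $(3,2)$ of the ambient Cartan curvature directly from the connection \eqref{eq:22}. In codimension one with $\gamma=0$ this block equals $d\beta+\beta\wedge\alpha$, where $\beta_{i}=M_{ij}\theta^{j}+K_{i}\psi$. Since a homogeneous Cartan geometry is flat, $\Omega\equiv 0$ on the ambient and in particular this block vanishes. Using the codimension one torsion-free equations derived from \eqref{eq:24} and \eqref{eq:x3.1} (which simplify to $d\theta^{i}=-\bar\alpha^{i}{}_{j}\wedge\theta^{j}$ and $d\psi=-\beta\wedge\theta$, since $K$ is symmetric in its normal indices), I would re-expand the identity in the basis $\{\theta^{l}\wedge\theta^{m},\,\psi\wedge\theta^{k}\}$ on $P'$ and read off the coefficient of $\psi\wedge\theta^{k}$. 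This produces a linear relation in which $M_{ij;0}$ appears additively, alongside $K_{i;k}$, the quadratic combinations $K_{i}K_{k}$ and $\sum_{l}M_{il}M_{jl}$, and a contribution $K_{l}\bar\alpha^{l}{}_{j,k}$ coming from the base connection.

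The remaining task is to show that the non-$M_{ij;0}$ contributions cancel. Here I would invoke the vanishing of the $(2,2)$ and $(3,3)$ blocks \eqref{eq:x2.2} and \eqref{eq:x3.3} (likewise forced by homogeneity), which provide auxiliary identities tying together $K_{i;k}$, $K_{i}K_{k}$, $\sum_{l}M_{il}M_{jl}$, and the base curvature $d\bar\alpha+\bar\alpha\wedge\bar\alpha$. Combined with the symmetry $M_{(ij)}=0$ and the symmetry of $K$ in its normal indices from the submersion example, one can split the $\psi\wedge\theta^{k}$ coefficient of block $(3,2)$ into its symmetric and skew parts in $(i,k)$: the symmetric part produces a constraint on $K$ and the base curvature that reflects the intrinsic geometry of $N$, while the skew part collapses to $M_{ij;0}=0$, which is the claim.

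The principal obstacle is the bookkeeping when combining the three curvature-vanishing equations and enforcing the symmetries: several quadratic terms conspire to produce contributions to the $\psi\wedge\theta^{k}$ coefficient, and their pairwise cancellation is delicate. A potentially cleaner alternative would be to apply the Bianchi identity $d\Omega=\Omega\wedge\omega-\omega\wedge\Omega$ with $\Omega\equiv 0$ as an integrability condition, propagating symmetries without having to expand each block individually; but for codimension one the direct expansion should remain tractable.
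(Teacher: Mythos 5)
Your route is genuinely different from the paper's, but as described it does not close. The quantity you isolate, $M_{ij;0}$, sits in the skew part (in the free indices $i,k$) of the $\psi\wedge\theta^{k}$ coefficient of block $(3,2)$, and that skew part is annihilated \emph{identically}, with no homogeneity input at all. Indeed, since the torsion block \eqref{eq:x3.1} vanishes, $d^{2}\psi=0$ forces $\Omega_{(3,2)}{}_{i}\wedge\theta^{i}+\Omega_{(3,3)}\wedge\psi=0$; in codimension one the second term is trivial, and the first says precisely that the $\psi\wedge\theta^{k}$ coefficient of $\Omega_{(3,2)}{}_{i}$ is symmetric in $(i,k)$. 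So when you expand block $(3,2)$ you will find that the skew part reads (up to quadratic $M$-terms that cancel by antisymmetry of $M$) $M_{ik;0}-K_{[i;k]}=0$ as a Bianchi identity, valid in any ambient space, while setting the block to zero by homogeneity only adds information to the \emph{symmetric} part, which is the relation \eqref{1stweyl} for $K_{(i;k)}$ and never touches $M_{ik;0}$. To conclude $M_{ik;0}=0$ you would therefore need $K_{[i;k]}=0$ first; but that is Corollary \ref{kmmagic}, whose proof (collecting the $\theta\wedge\theta\wedge\psi$ terms of $d^{2}\psi$) silently drops the $dM\wedge\theta\wedge\theta$ contribution because $M$ is already known to be basic --- i.e.\ it presupposes the very proposition you are trying to prove. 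Your fallback of invoking the Bianchi identity globally runs into the same wall: it is exactly the Bianchi identity that makes the skew part vacuous. A secondary point: ``homogeneous'' here means constant curvature, which is flat only after mutation of the model; this is harmless for you because the residual constant-curvature term $c\,\delta_{ik}$ is symmetric, but it should be said.

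The paper avoids differentiating $M$ altogether. It takes the $\theta\wedge\theta$ coefficient of block $(2,2)$, as in \eqref{eq:25}, and observes that homogeneity of the ambient plus basicness of the curvature $d\bar\alpha+\bar\alpha\wedge\bar\alpha$ of $N$ force the purely quadratic expression $R_{ijkl}=M_{ik}M_{jl}-M_{il}M_{jk}+2M_{ij}M_{lk}$ of \eqref{eq:27} to be constant along each leaf; the special component $R_{1212}=-M_{12}^{2}$ (and its analogues for every index pair) then pins down each $|M_{ij}|$, and smoothness fixes the sign. That algebraic observation is the essential idea your proposal is missing; without it, or without an independent proof that $K_{[i;j]}=0$, the $(3,2)$-block computation cannot deliver $M_{ij;0}=0$.
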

\begin{proof}
  Using \eqref{eq:x2.2} and \eqref{eq:x3.1}, we can collect the terms in \eqref{eq:x2.2} that are of the basis $\theta\wedge\theta$, i.e.,
  \begin{equation}
    \label{eq:25}
    (2,2)|_{\theta\wedge\theta}=d\bar\alpha+\bar\alpha\wedge\bar\alpha-\theta^{t}M^{t}\wedge M\theta-MM\theta\wedge\theta.
  \end{equation}
Now the left hand side is constant since $M$ is homogeneous, $d\bar\alpha+\bar\alpha\wedge\bar\alpha$ is the curvature of the space $N$ and hence is basic. This shows the expression
\begin{equation}
  \label{eq:26}
  \theta^{t}M^{t}\wedge M\theta+MM\theta\wedge\theta
\end{equation}
must be basic as well. Using indices, this means that the function
\begin{equation}
  \label{eq:27}
  R_{ijkl}=M_{ik}M_{jl}-M_{il}M_{jk}+2M_{ij}M_{lk}
\end{equation}
is constant along each leaf. Now, for example
\begin{equation}
  \label{eq:34}
  R_{1212}=-M_{12}M_{21}+2M_{12}M_{21}=-M_{12}^{2}
\end{equation}
since the function $M$ is smooth, this shows that $M_{12}$ is constant along each leaf. The claim is proved by noting that we can choose indices to generate $-M_{ij}^{2}=\text{const.}$ along each leaf for any pair of $(i,j)$.
\end{proof}
Let us remark again that it is not true that we can always use the above algebraic method to show that a function is basic for a projection: usually the right $H$ action will change the value of the function along the leaf. Our method makes sense in this case due to the fact that by using the pulled back form $\theta$ on the principal bundle over $M$, we have already killed the degree of freedom that affects $M_{ij}$, i.e., in the reduced bundle, the function $M_{ij}$ is constant in each fibre, as can be shown by explicit calculation.

Also, for homogeneous spaces we can set $(2,2)=0$ straight away by a process called mutation: instead of considering the Euclidean model \eqref{eq:1}, we can use a spherical model
\begin{equation}
  \label{eq:32}
    \mathfrak g'=\left\{
    \begin{pmatrix}
      0&-v\\
      v&a
    \end{pmatrix}\in M_{n+1}(\rs)\mid a\in \mathfrak o(n)
\right\}
\end{equation}
or a hyperbolic model
\begin{equation}
  \label{eq:33}
    \mathfrak g''=\left\{
    \begin{pmatrix}
      0&v\\
      v&a
    \end{pmatrix}\in M_{n+1}(\rs)\mid a\in \mathfrak o(n)
\right\}  
\end{equation}
such mutation of model does not affect the geometry in anyway, but globally subtracts from any curvature function a constant (in the spherical model, a sphere of unit diameter is ``flat'' whereas a plane is not). By setting the scale correctly, we can comfortably set all curvature blocks in $M$ to zero.
\begin{cor}
\label{kmmagic}
  For the assumption of the above proposition, $K_{[i;j]}=0$.
\end{cor}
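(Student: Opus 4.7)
The strategy is to apply the Bianchi identity $d^{2}\psi=0$ to the submersion-integrability equation \eqref{eq:x3.1} and extract the coefficient that isolates $K_{[i;j]}$. In codimension one ($\gamma=0$, $\psi$ a single 1-form) \eqref{eq:x3.1} reduces to
\begin{equation*}
d\psi = -M_{ij}\,\theta^{j}\wedge\theta^{i} - K_{i}\,\psi\wedge\theta^{i},
\end{equation*}
so the corollary will drop out of the coefficient of $\psi\wedge\theta^{i}\wedge\theta^{j}$ in $d(d\psi)=0$.

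To carry this out I would first substitute Cartan's first structural equation $d\theta^{i}=-\bar\alpha^{i}{}_{j}\wedge\theta^{j}$ (which follows from setting the $(2,1)$ torsion in \eqref{eq:24} to zero and using the antisymmetry of $M$), and then decompose $dM_{ij}$ and $dK_{i}$ into their vertical (connection-correction) and horizontal (covariant-derivative) parts on the reduced bundle $P'$. The vertical parts are forced by the tensorial transformation law of $\beta$ under the residual $SO(m)$ action, which I would read off from $R_{h}^{*}\omega=\Ad(h^{-1})\omega$ applied to the $(3,2)$-block of \eqref{eq:22}: $K_{i}$ transforms as a covector and $M_{ij}$ as an antisymmetric $(0,2)$-tensor in the tangential indices. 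The horizontal parts define the covariant derivatives $M_{ij;l},M_{ij;0},K_{i;l},K_{i;0}$ along $\theta^{l}$ and $\psi$ in the sense of \eqref{eq:36}.

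With these formulas in hand, $d(d\psi)=0$ becomes a 3-form identity on $P'$. I would group its terms by basis type: $\bar\alpha\wedge\theta\wedge\theta$, $\bar\alpha\wedge\psi\wedge\theta$, $\theta\wedge\theta\wedge\theta$, and $\psi\wedge\theta\wedge\theta$. A short relabelling of dummy indices shows that the two $\bar\alpha$-families cancel identically (a tensoriality sanity check); the $\theta\wedge\theta\wedge\theta$ piece is a separate constraint, not needed here. The relevant equation lives in the $\psi\wedge\theta^{i}\wedge\theta^{j}$ coefficient: because $\theta^{i}\wedge\theta^{j}$ only sees the part antisymmetric in $(i,j)$, the symmetric quadratic contributions $K_{i}K_{j}$ and $M_{li}M_{lj}$ drop out, leaving
\begin{equation*}
M_{ij;0} = K_{[i;j]}.
\end{equation*}
The preceding proposition says $M_{ij}$ is basic for $\pi:M\rightarrow N$, which is exactly the statement $M_{ij;0}=0$ (the derivative along the fibre direction $\psi^{\#}$), so $K_{[i;j]}=0$.

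The only genuine obstacle is combinatorial: pinning down the correct signs for the connection corrections in $dM_{ij}$ and $dK_{i}$ (which I would derive by a direct infinitesimal calculation from $R_{h}^{*}\beta=\beta r$ for $r\in SO(m)$) and then tracking the numerous 3-form terms during the expansion without error. As a consistency check one could instead work from the mutated curvature identity $\Omega_{(3,2)}=0$ (permissible here because the homogeneity hypothesis of the proposition flattens the Cartan geometry of $M$) and read off the antisymmetric-in-$(i,j)$ part of its coefficient of $\psi\wedge\theta^{j}$; the two routes must yield the same relation $M_{ij;0}=K_{[i;j]}$.
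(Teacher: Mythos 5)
Your proposal is correct and follows essentially the same route as the paper: exterior differentiate the torsion equation \eqref{eq:x3.1}, isolate the $\psi\wedge\theta\wedge\theta$ coefficient, and observe that everything except $K_{[i;j]}$ drops out once $M_{ij}$ is known to be basic. The only cosmetic difference is that you kill the residual quadratic terms by the antisymmetry of $\theta^{i}\wedge\theta^{j}$ acting on the symmetric combinations $K_{i}K_{j}$ and $M_{li}M_{lj}$, whereas the paper cites the $(3,3)$ structure equation \eqref{eq:x3.3} for the same cancellation (in codimension one these are the same fact); you also make explicit the reliance on the preceding proposition via $M_{ij;0}=0$, which the paper leaves implicit.
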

\begin{proof}
  For \eqref{eq:x3.1}, we exterior differentiate again. Note that we have a term $K_{[i;j]}\psi^{A}\wedge\theta^{i}\wedge\theta^{j}$. Hence we need to collect the other terms that have the basis $\theta\wedge\theta\wedge\psi$. It is easy to show that the only contribution is proportional to $MM\theta\wedge\theta\wedge\psi$. However, looking at \eqref{eq:x3.3}, it is clear that this term vanishes.
\end{proof}
\begin{cor}
\label{basick}
  For the assumption of the above proposition, if in addition $M\neq 0$, then $K$ is basic.
\end{cor}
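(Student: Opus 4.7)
My plan is to derive a tensorial constraint on $K_{i;0}$ by applying the vertical derivative $\pd_\psi$ to an algebraic formula for $M_{ik;l}$ extracted from $(2,2)=0$. Throughout I work on the reduced bundle $P'$ in the homogeneous case, where the mutation trick of the proposition kills all ambient curvature blocks.

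First I would extract the $\psi\wedge\theta^l$ coefficient of the block equation $(2,2)^i{}_k=0$ in \eqref{eq:x2.2}. Using $d\theta^m=-\bar\alpha^m{}_p\wedge\theta^p$ and $d\psi=M_{pq}\theta^p\wedge\theta^q-K_p\psi\wedge\theta^p$ (obtained from $(2,1)=0$ and $(3,1)=0$), this coefficient yields the algebraic identity
\[
M_{ik;l} = -M_{ik}K_l - M_{il}K_k + K_i M_{kl},
\]
where $M_{ik;l}$ is the directional derivative of the function $M_{ik}$ along the basic vector field $\pd_l$ dual to $\theta^l$.

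Next I would compute the commutator $[\pd_l,\pd_\psi]$ on $P'$ by Cartan's formula $\omega([X,Y])=X(\omega(Y))-Y(\omega(X))-d\omega(X,Y)$. The $\theta^m$-component vanishes by the torsion-free equation, the $\psi$-component equals $-K_l$ from the $\psi\wedge\theta^l$ piece of $d\psi$, and the $\bar\alpha^a{}_b$-component vanishes because the $\psi\wedge\theta^l$ coefficient of $d\bar\alpha^a{}_b$ read off from $(2,2)^a{}_b=0$ is $-M_{ab;l}-M_{ab}K_l-M_{al}K_b+K_aM_{bl}$, which cancels identically after substituting the displayed identity above. Thus $[\pd_l,\pd_\psi]=-K_l\pd_\psi$. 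Applying $\pd_\psi$ to the $M_{ik;l}$-identity and using the proposition ($\pd_\psi M_{ab}=0$), the left-hand side becomes $\pd_l\pd_\psi M_{ik}+K_l\pd_\psi M_{ik}=0$, so the right-hand side must vanish as well, giving the key constraint
\[
M_{ik}K_{l;0} + M_{il}K_{k;0} - K_{i;0}M_{kl} = 0 \qquad\text{for all } i,k,l.
\]

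Finally, at any point where $M\neq 0$, I would pick $a,b$ with $M_{ab}\neq 0$. Setting $(i,k,l)=(a,b,a)$ in the constraint and using $M_{aa}=0$ together with the skew-symmetry $M_{ba}=-M_{ab}$ forces $2M_{ab}K_{a;0}=0$, so $K_{a;0}=0$; symmetrically $K_{b;0}=0$. Taking $(i,k)=(a,b)$ with general $l$ then collapses the constraint to $M_{ab}K_{l;0}=0$, so $K_{l;0}=0$ for every $l$, which is the claim. The step I expect to require the most care is the vanishing of the $\bar\alpha$-component of $[\pd_l,\pd_\psi]$, since it relies on the precise cancellation that only appears after substituting the formula for $M_{ab;l}$ back into the $\psi\wedge\theta^l$ coefficient of $d\bar\alpha^a{}_b$.
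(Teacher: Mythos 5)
Your proposal is correct and follows essentially the same route as the paper: both extract the $\theta\wedge\psi$ coefficient of the structure equation $(2,2)=0$ (together with $(3,1)=0$), use the previous proposition to kill the vertical derivative of $M$, and conclude that the resulting $KM$ products must be basic, whence $K$ is basic wherever $M\neq 0$. Your commutator computation and the explicit linear-algebra step showing $M_{ab}\neq 0$ forces $K_{l;0}=0$ for all $l$ merely spell out what the paper compresses into ``this simplifies to the fact that $K_{i}M_{jk}$ is basic.''
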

\begin{proof}
  We again use \eqref{eq:x2.2} and \eqref{eq:x3.1}, but now focus on terms containing $\theta\wedge\psi$. The only term that needs discussion is $-dM\wedge\psi$, but since $M$ is basic, $dM$ is as well. Hence the expression
  \begin{equation}
    \label{eq:35}
    B\wedge\psi = -MK\theta\wedge\psi-\psi^{t}K^{t}\wedge M\theta-\theta^{t}M^{t}\wedge K\psi
  \end{equation}
where $B$ is some basic form for the projection $\pi:M\rightarrow N$. Writing out using coordinates, this simplifies to the fact that 
$K_{i}M_{jk}$ is basic. Since some component of $M$ does not vanish, it follows that $K$ must also be basic.
\end{proof}
\begin{thm}
\label{herglotz1}
  If the ambient space is homogeneous and $M\neq 0$, then a Riemannian rigid flow is isometric.
\end{thm}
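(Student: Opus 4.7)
Being isometric translates in the submersion framework to the vanishing of $K$, which encodes the expansion/acceleration component of the flow (with $M$ playing the role of vorticity). We will therefore devote the proof to showing $K\equiv 0$. At our disposal are three ingredients already established: $M$ and $K$ are both basic on $N$ (from the Proposition and Corollary~\ref{basick}), and $K_{[i;j]}=0$ (Corollary~\ref{kmmagic}). The plan is to extract one more differential identity from the structure equations and then reduce to a short linear-algebra argument.

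The first step is to examine the coefficient of $\theta^{k}\wedge\psi$ in the curvature equation~\eqref{eq:x2.2}. After mutating so that the ambient homogeneous curvature is captured by the appropriate flat model, and specialising to codimension one (so $\gamma=0$), the left-hand side vanishes. The term $-dM\wedge\psi$ reduces to $-M_{ij;k}\theta^{k}\wedge\psi$ by basicness of $M$; the term $-M\,d\psi$ is expanded via~\eqref{eq:x3.1}; and the remaining bilinear pieces coming from $\psi^{t}K^{t}\wedge M\theta+\theta^{t}M^{t}\wedge K\psi$ are computed directly. Collecting coefficients yields the explicit identity
\begin{equation*}
M_{ij;k} \;=\; -M_{ij}K_{k} - M_{ik}K_{j} + K_{i}M_{jk}.
\end{equation*}
A quick internal check confirms that the right-hand side is indeed skew in $(i,j)$, as it must be.

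The second step is to convert this into a purely algebraic relation between $M$ and $K$. My approach is to differentiate the identity once more and apply the Ricci identity $M_{ij;kl}-M_{ij;lk}=R^{N}\cdot M$, substituting the explicit expression $R^{N}_{ijkl}=M_{ik}M_{jl}-M_{il}M_{jk}-2M_{ij}M_{kl}$ that fell out of the $\theta\wedge\theta$ component of the very same equation~\eqref{eq:x2.2} (and was used in the Proposition), while freely using $K_{(i;j)}=K_{j;i}$ from Corollary~\ref{kmmagic} to interchange mixed derivatives of $K$. I expect the covariant-derivative contributions to collapse against the curvature terms on the right, leaving an algebraic constraint of the form $K_{i}M_{jk}=K_{j}M_{ik}$.

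The final step is a short algebraic deduction from that identity: setting $j=k$ and using $M_{kk}=0$ gives $K_{i}M_{ij}=0$ for every $i,j$ (no sum), so choosing any pair with $M_{i_{0}j_{0}}\neq 0$ (which exists by hypothesis) forces $K_{i_{0}}=K_{j_{0}}=0$, and feeding these back into the identity propagates the vanishing to all remaining components of $K$. The main obstacle is the Ricci-identity bookkeeping in step two, where the skew-symmetry of $M$, the symmetry of $K_{(i;j)}$, and the specific form of $R^{N}$ all have to be invoked at just the right moment for the covariant-derivative debris to cancel; once the algebraic relation $K_{i}M_{jk}=K_{j}M_{ik}$ is isolated, the conclusion $K\equiv 0$, and hence that the flow is isometric, is essentially immediate.
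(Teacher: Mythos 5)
Your proof rests on a false premise: in this framework ``isometric'' does \emph{not} translate to $K\equiv 0$, and $K$ does not in fact vanish for a general rigid isometric flow. The functions $K_i$ encode the acceleration of the unit tangent to the flow (the second-fundamental-form data of the leaves), and the orbits of a Killing field are not geodesics unless the field has constant norm. A concrete counterexample to your target statement is the screw motion $V=\partial_z+\omega(-y\,\partial_x+x\,\partial_y)$ in Euclidean $\mathbb{R}^3$: this is a Killing flow, the distribution orthogonal to it is non-integrable so $M\neq 0$, yet $|V|^2=1+\omega^2 r^2$ is non-constant, so the unit flow vector accelerates and $K\neq 0$. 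Consequently the algebraic identity $K_iM_{jk}=K_jM_{ik}$ you hope to extract in your second step cannot be a consequence of the structure equations (the Ricci-identity bookkeeping will not collapse as you expect), and the statement $K\equiv 0$ that your third step would deliver is simply false under the hypotheses of the theorem.

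What the paper actually does is much softer and uses your three ingredients for a different purpose. One writes the metric as $g=\psi^0\otimes\psi^0+\sum_i\theta^i\otimes\theta^i$ and computes $\mathcal{L}_{\mathbf V}g$ for $\mathbf V=\lambda\mathbf I_0$, where $\mathbf I_0$ is dual to $\psi^0$ and $\lambda$ is an undetermined positive function. The contribution $\sum_{ij}\lambda M_{ij}(\theta^i\otimes\theta^j+\theta^j\otimes\theta^i)$ vanishes identically because $M_{ij}$ is skew --- this is where rigidity enters, and it is why nothing further needs to be proved about $M$. The surviving conditions are $\mathbf I_0(\lambda)=0$ and $\mathbf I_i(\lambda)=\lambda K_i$, i.e.\ $K_i=\partial_i\log\lambda$ for a basic function $\lambda$; this is locally solvable by the Poincar\'e lemma precisely because $K_{[i;j]}=0$ (Corollary \ref{kmmagic}) and $K$ is basic (Corollary \ref{basick}, which is where the hypothesis $M\neq 0$ is used). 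So the established facts guarantee the \emph{integrability} of the equation for the Killing rescaling $\lambda$; they are not meant to force $K$ to vanish. Your closing linear-algebra deduction is fine as algebra, but it is aimed at a conclusion that is neither true nor needed.
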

\begin{proof}
  The easiest way to show this is to take a section of the bundle $P$ and calculate the metric. The Riemannian metric is now
  \begin{equation}
    \label{eq:37}
    g=\psi^{0}\otimes\psi^{0}+\sum_{i}\theta^{i}\otimes\theta^{i}.
  \end{equation}
Let $\mathbf{I}_{0}$ be dual to $\psi^{0}$, $\mathbf{I}_{i}$ be dual to $\theta^{i}$. For a vector field $\mathbf{V}=\lambda\mathbf{I_{0}}$ where $\lambda$ is a function on $M$, we have
\begin{equation}
  \label{eq:38}
  \mathcal{L}_{\mathbf{V}}g=\sum_{ij}\lambda M_{ij}(\theta^{i}\otimes\theta^{j}+\theta^{j}\otimes\theta^{i})+\mathbf{I}_{0}(\lambda)\psi^{0}\otimes\psi^{0}+(\mathbf{I}_{i}(\lambda)-\lambda K_{i})(\theta^{i}\otimes\psi^{0}+\psi^{0}\otimes\theta^{i}).
\end{equation}
If for some $\lambda$ the whole thing vanishes, we have an isometry. The term involving $M_{ij}$ vanishes automatically dual to antisymmetry. The question now reduces to: can we find $\lambda$ such that
\begin{equation}
  \label{eq:39}
  \mathbf{I}_{0}(\lambda)=0,\qquad \mathbf{I}_{i}(\lambda)=\lambda K_{i}
\end{equation}
The first condition just says that $\lambda$ is required to be basic for the projection $\pi:M\rightarrow N$ as well.
Hence we will look for a solution on $N$. The second condition can be rewritten as
\begin{equation}
  \label{eq:41}
  K_{i}=\frac{\pd \log\lambda}{\pd x^{i}}
\end{equation}
for some coordinate system $x^{i}$ on $N$. Poincar\'e{} Lemma says that this can be locally solved as long as $K_{[i;j]}=0$, but this we already know. Finally, since $K_{i}$ is basic, $\lambda$ constructed this way is also basic.
\end{proof}

\subsection{Riemannian rigid flow in conformally flat space}
\label{sec:conf-flat-case}
While we are at it, let us prove the following conceptually simple but calculationally very intense result
\begin{thm}
  If the ambient space is conformally flat and $M\neq 0$, then a Riemannian rigid flow is isometric.
\end{thm}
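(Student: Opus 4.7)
The strategy is to lift the problem from the Riemannian Cartan framework of Section \ref{sec:riem-rigid-flow} into the M\"obius Cartan framework of Section \ref{sec:exampl-riem-geom}. In this richer setting, the hypothesis that the ambient space is conformally flat becomes exactly the statement that the M\"obius Cartan connection on $M$ is flat---precisely the structural input that Proposition 4.5 and its corollaries used in the homogeneous case. The Riemannian metric singles out a particular scale within the conformal class, and Riemannian rigidity amounts to preservation of both the conformal structure and this chosen scale. A Riemannian rigid flow in a conformally flat space therefore lifts to a structure-preserving flow on a flat M\"obius Cartan geometry, with additional scale-preservation conditions inherited from the Riemannian refinement.

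The bulk of the proof then mirrors the argument of Section \ref{sec:riem-rigid-flow} inside the M\"obius framework. The crucial step is the M\"obius analog of Proposition 4.5: collecting the $\theta\wedge\theta$ part of the M\"obius analog of \eqref{eq:x2.2} and invoking the vanishing of the ambient M\"obius curvature yields an algebraic identity on $M_{ij}$ of the same shape as \eqref{eq:27}, and the contraction argument of the existing proof then shows that $M_{ij}$ is basic for $\pi\colon M\to N$. The analogs of Corollary \ref{kmmagic} and Corollary \ref{basick} go through verbatim to give $K_{[i;j]}=0$ and, using $M\neq 0$, $K_i$ basic. The final step coincides with the proof of Theorem \ref{herglotz1}: we solve \eqref{eq:39} for a function $\lambda$ on $N$ using the Poincar\'e lemma (available since $K_{[i;j]}=0$) and verify that $\lambda\mathbf{I}_0$ generates an isometry.

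The main obstacle is the bookkeeping caused by the extra blocks of the M\"obius Cartan connection \eqref{eq:47}---the scaling form $\epsilon$, the co-vectorial $\nu$, and the $\mu$ block---that have no Riemannian analog. One must verify that the additional M\"obius curvature equations involving $E$, $Y$ and the extra entries of $\Theta$ and $A$ either vanish automatically under flatness plus normality or reduce to identities consistent with the Riemannian-level data $M_{ij}$ and $K_i$. A further consistency check is needed to confirm that the Riemannian scale choice, which reduces the principal bundle beyond what bare conformal rigidity would require, is preserved along the flow without generating algebraic constraints beyond those already encoded. Once these verifications are in place, the conclusion follows along the calculational path of Section \ref{sec:riem-rigid-flow}, with M\"obius algebra replacing the Riemannian algebra throughout; this is substantially more involved computationally but conceptually identical, reflecting the principle that conformally flat spaces are to M\"obius geometry what constant-curvature spaces are to Riemannian geometry.
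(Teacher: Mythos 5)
Your overall plan---recast conformal flatness as flatness of the normal M\"obius Cartan geometry of Section \ref{sec:exampl-riem-geom} and rerun the homogeneous-case argument there---is a genuinely different route from the paper, which never leaves the Riemannian framework: it writes the Weyl tensor explicitly via \eqref{eq:20}, uses the equation for $W_{0i0j}$ (lines \eqref{aline}--\eqref{cline}) to trade the Schouten-type combination $F_{\lambda\nu}$ for basic functions plus quadratics in $M_{ij}$, and only then extracts the algebraic identity \eqref{eq:29} forcing $M_{ij}$ to be basic. Your route is viable in principle, but as written it has a gap exactly at the step you call crucial.

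The problem is the claim that collecting the $\theta\wedge\theta$ part of the M\"obius analogue of \eqref{eq:x2.2} and setting the ambient M\"obius curvature to zero "yields an algebraic identity on $M_{ij}$ of the same shape as \eqref{eq:27}," after which the contraction argument "goes through verbatim." It does not. In the M\"obius connection \eqref{eq:47} the $A$-block curvature is $d\alpha+\theta\wedge\nu+\alpha\wedge\alpha+\nu^{t}\wedge\theta^{t}$, and in the normal geometry the semibasic forms $\nu$ are determined by the Schouten tensor of whichever Riemannian scale you have fixed. Their $\theta$-components are \emph{not} basic for $\pi:M\rightarrow N$ a priori, so the $\theta\wedge\theta$ part of the flat-curvature equation reads $0=(\text{quadratic in }M_{ij})+(\text{Schouten components})+(\text{basic})$, not $0=(\text{quadratic in }M_{ij})+(\text{basic})$. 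In the homogeneous case of Proposition 4.5 the left-hand side was constant outright, which is precisely what fails here; the whole content of the theorem is the elimination of those non-basic Schouten/$K$ contributions, which the paper performs by substituting the $0i0j$ curvature equation \eqref{1stweyl} into \eqref{eq:20} so that $F_{\lambda\nu}$ becomes (basic) $+$ (quadratic in $M$). Your proposal defers this to an unspecified "consistency check" on the extra blocks $E$, $Y$, $\nu$, $\mu$, but without carrying it out the conclusion that $M_{ij}$ is basic---and hence, via the analogues of Corollaries \ref{kmmagic} and \ref{basick} and the Poincar\'e-lemma argument of Theorem \ref{herglotz1}, the isometry---does not follow. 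If you supply the M\"obius version of that elimination (identifying the $\nu$-coefficients with the combination appearing in \eqref{cline} and substituting), your argument closes and becomes an equivalent, arguably more conceptual, packaging of the paper's computation.
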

\begin{proof}
  By the proof of theorem \ref{herglotz1}, we need to show that $K_{i}$ in this case is basic and $K_{[i;j]}=0$. Corollary \ref{kmmagic} does not use properties of the ambient space, hence the latter condition reduces to showing that $M_{ij}$ is basic. Let us first state our strategy:
  \begin{enumerate}
  \item eliminate all terms containing $K$ and derivatives of $M$ in equations, leaving only products of $M$ and other functions;
  \item eliminate all non-basic functions by using the conformal flatness condition;
  \item the remaining equations then give algebraic constraints on $M$ in terms of basic functions.
  \end{enumerate}
To prepare for our calculation, let us write \eqref{eq:24} onwards using a basis for the Lie algebra, and use covariant derivatives. We have the following conditions
\begin{align}
  R_{0i0j}&=-K_{(i;j)}+K_{i}K_{j}+M_{kj}M^{k}{}_{i}\label{1stweyl}\\
  R_{0ijk}&=-M_{ik;j}+M_{ij;k}-2K_{i}M_{kj}\\
  R_{ij0k}&=-M_{ij;k}-K_{i}M_{jk}+K_{j}M_{ik}+K_{k}M_{ij}\\
  R_{ijkl}&=\widetilde R_{ijkl}+M_{ik}M_{jl}-M_{il}M_{jk}-2M_{ij}M_{lk}\\
  R_{00}&=-K^{i}{}_{;i}+K_{i}K^{i}+M^{ij}M_{ij}\\
  R_{0i}&=M^{j}{}_{i;j}+2K^{j}M_{ij}\\
  R_{ij}&=\widetilde R_{ij}+2M_{ik}M_{j}{}^{k}-K_{i}K_{j}+\tfrac{1}{2}(K_{i;j}+K_{j;i})\\
  R&=\widetilde R+2K^{i}{}_{;i}-2K_{i}K^{i}+M_{ij}M^{ij}
\end{align}
where we have repeatedly used theorem \ref{kmmagic}. Here $R$ denotes the Riemannian curvature \emph{function} on $M$ and the associated Ricci curvature functions, whereas $\tilde R$ denotes the same thing on $N$. Before, we used the homogeneous condition to set all $R$ to be constants and used algebraic conditions hidden in the set of equations to force the components of $M$ to be constant. Now the problem is more difficult, since $R$ could vary from point to point even along each leaf. We can only be sure that the Weyl part of the curvature function vanishes. The Weyl curvature function is defined in terms of the Riemann curvature function as
\begin{align}
  \label{eq:100}
  W_{\mu\nu\rho\lambda}&=R_{\mu\nu\rho\lambda}-\frac{2}{n-2}(\delta_{\mu[\rho}R_{\lambda]\nu}-\delta_{\nu[\rho}R_{\lambda]\mu})+\frac{2}{(n-1)(n-2)}R\delta_{\mu[\rho}\delta_{\lambda]\nu}\\
&=R_{\mu\nu\rho\lambda}-\delta_{\mu\rho}F_{\lambda\nu}+\delta_{\mu\lambda}F_{\rho\nu}+\delta_{\nu\rho}F_{\lambda\mu}-\delta_{\nu\lambda}F_{\mu\rho}
\end{align}
where in the last line
\begin{equation}
  \label{eq:20}
  F_{\lambda\nu}=\frac{1}{n-1}R_{\lambda\nu}+\frac{1}{2(n-1)(n-2)}R\delta_{\lambda\nu}
\end{equation}
Using the above, let us calculate the corresponding Weyl equation to \eqref{1stweyl}:
\begin{align}
  W_{0i0j}&=\left(\frac{1}{n-2}\widetilde R_{ij}-\frac{1}{(n-1)(n-2)}\widetilde R\delta_{ij}\right)\label{aline}\\
&\quad+\left(M_{kj}M^{k}{}_{i}+\frac{2}{n-2}M_{ik}M_{j}{}^{k}-\frac{1}{(n-1)(n-2)}M_{kl}M^{kl}\right)\label{bline}\\
&\quad+\left(\frac{n-3}{n-2}(K_{i}K_{j}-K_{(i;j)})+\frac{2}{(n-1)(n-2)}\delta_{ij}(K_{k}K^{k}-K^{k}{}_{;k})\right)\label{cline}
\end{align}
Observe: the left hand side and the first line are basic functions for the projection to $N$, the second line is a quadratic function of the components of $M_{ij}$, and the third line is exactly the terms containing $K$ and its derivatives occurring in $F_{ij}$ in \eqref{eq:20}. This means that by substitution we can exchange $F_{\lambda\nu}$ for a sum of basic functions and quadratic products of the components of $M_{ij}$. For $W_{ijkl}$, we have
\begin{equation}
  \label{eq:28}
    W_{ijkl}=R_{ijkl}-\delta_{ik}F_{lj}+\delta_{il}F_{kj}+\delta_{jk}F_{li}-\delta_{jl}F_{ik}
\end{equation}
now $R_{ijkl}$ is a sum of basic functions and quadratic products of $M_{ij}$ as well. So schematically we have
\begin{equation}
  \label{eq:2}
  0=(\text{quadratic terms of $M_{ij}$})+(\text{basic functions})
\end{equation}
expanding, we get
\begin{equation}
  \label{eq:29}
    M_{ik}M_{jl}-M_{il}M_{jk}-2M_{ij}M_{lk}=\text{basic function}
\end{equation}
hence $M_{ij}$ is basic as well. As all terms involving $M$ are now basic, the problem simplifies greatly and the calculation analogous to corollary \ref{basick} shows that $K$ is basic as well as long as $M\neq 0$.
\end{proof}

Even though we have not yet started investigating conformal geometry, we already have the following result:
\begin{cor}
  For a flat conformal geometry, a conformally rigid flow is automatically conformally isometric as long as $M\neq0$ in some representative Riemannian structure.
\end{cor}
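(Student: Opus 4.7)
The plan is to reduce this corollary to the preceding Riemannian theorem by making a judicious choice of Riemannian representative inside the given flat conformal class on $M$.

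First I would observe that since the ambient Möbius geometry is flat, \emph{every} Riemannian metric in the conformal class on $M$ is conformally flat in the Riemannian sense. The ``ambient space is conformally flat'' hypothesis of the preceding theorem is therefore met for any representative I might pick, which lets me focus entirely on arranging the submersion-level data rather than on curvature.

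Next I would use the structure-preserving nature of the conformally rigid flow to descend the conformal structure from $M$ to the quotient $N$. Picking an arbitrary Riemannian representative $\bar g$ in this induced conformal class on $N$, I would then select the representative $g$ on $M$ whose horizontal part projects onto $\bar g$; in bundle terms this amounts to spending the remaining group degree of freedom on $N$ in \eqref{eq:64} (the $z$-scaling and the associated $\epsilon$) to match $\bar g$ while leaving the leaves untouched. In this representative the submersion is now Riemannian in the strict sense of Section~\ref{sec:subm-immers-topol}, so the original conformally rigid flow has become a Riemannian rigid flow on $(M,g)\to (N,\bar g)$.

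If the representative $g$ is one in which $M\neq 0$ (the hypothesis of the corollary), the preceding Riemannian theorem applies and produces a Killing field for $(M,g)$ generating the flow. Since any Riemannian isometry of a single representative of a conformal class is \emph{a fortiori} a conformal isometry of the whole class, the conformally rigid flow is conformally isometric, as required. The genuinely content-bearing step — and where the main care is needed — is the middle one: verifying that the conformal-to-Riemannian rescaling can always be carried out compatibly with the bundle-level semi-reduction of the conformally rigid submersion, so that the rescaled $g$ is smooth and globally defined and the leaves remain pairwise equivalent as Riemannian submanifolds. Once this compatibility is checked, the rest is essentially invoking the previous theorem.
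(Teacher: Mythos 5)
Your proposal follows essentially the same route as the paper: use the conformal scaling freedom to pass to a Riemannian representative in which the flow becomes a Riemannian rigid flow, note that flatness of the M\"obius geometry makes any such representative conformally flat, invoke the preceding theorem, and observe that a Riemannian isometry is \emph{a fortiori} a conformal isometry. The one point you leave dangling is the phrase ``if the representative $g$ is one in which $M\neq 0$'': the hypothesis only asserts $M\neq 0$ in \emph{some} representative, not in the particular $g$ you construct, so you still need the paper's observation that the condition $M\neq 0$ is conformally invariant (scaling by a nowhere-zero function cannot kill it) to conclude that your chosen representative also satisfies it.
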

\begin{proof}
  Since a conformally structure allows scaling by non-zero functions, the condition $M\neq 0$ is conformally invariant. Since the flow is conformally rigid, it preserves the metric up to scale, and by using the scaling we can find a Riemannian representative that is conformally flat and in which the flow is a Riemannian rigid flow. Hence in this representative the flow is an isometry, and in the conformal geometry the original flow must be conformally isometric.
\end{proof}

\subsection{The Ricci-flat case}
\label{sec:ricci-flat-case}

The relevant equations in this case are:
\begin{align}
  \label{eq:74}
    R_{00}&=0=-K^{i}{}_{;i}+K_{i}K^{i}+M^{ij}M_{ij}\\
  R_{0i}&=0=M^{j}{}_{i;j}+2K^{j}M_{ij}\\
  R_{ij}&=0=\widetilde R_{ij}+2M_{ik}M_{j}{}^{k}-K_{i}K_{j}+\tfrac{1}{2}(K_{i;j}+K_{j;i})\\
  R&=0=\widetilde R+2K^{i}{}_{;i}-2K_{i}K^{i}+M_{ij}M^{ij}
\end{align}
Using the first and last equations we get 
\begin{equation}
  \label{eq:81}
 \sum_{ij}M^{ij}M_{ij}=\sum_{i}(K^{i}{}_{;i}-K^{i}{}_{i})=\text{constant along each leaf} 
\end{equation}
 Now there are not enough indices for doing the kind of computations we did before and we can only conclude  that the sum of the square of all terms is constant, whereas $M_{ij}$ itself can undergo rotation in the leaf direction. However, there are still other constraints to satisfy, and this still places serious constraints on the existence of rigid flow.

\subsection{The uniqueness of rigid flow}

Even in general curves background, when the Herglotz-Noether theorem and its generalisations are false (there might not be any Killing vectors at all), we still have the following uniqueness result concerning rotational rigid flow.
\label{sec:exist-uniq-flow}

\begin{thm}
  Assume that we have a rigid flow in an arbitrary Riemannian manifold $M$ and let $T\subset M$ be a hypersurface transversal to the flow. If the value of $M_{ij}$ is known on $T$, then it is known throughout $M$. In addition, for any open set containing $T$ in which $M_{ij}$ does not vanish, the value of $K_{i}$ throughout the open set is determined by the value of $M_{ij}$ value on $T$.
\end{thm}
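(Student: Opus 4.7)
The plan is to reformulate the theorem as a uniqueness statement for a first-order ODE system along the flow lines.

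\emph{Step 1: deriving the evolution equations.} Exterior differentiating the decomposition $\beta^{0}{}_{i}=M_{ij}\theta^{j}+K_{i}\psi^{0}$ and substituting the structure equations \eqref{eq:x2.2}, \eqref{eq:x3.1}, \eqref{eq:x3.3} with the codimension-one simplification $\gamma\equiv 0$, one isolates the $\psi^{0}\wedge\theta^{k}$-components to obtain an equation of the schematic form
\begin{equation*}
M_{ik;0}=\Phi_{ik}(M,K,R),
\end{equation*}
where $R$ denotes the (known) Riemann curvature of the ambient $M$ and $\Phi$ is a polynomial in its arguments. A parallel computation on the remaining curvature blocks (extracting the coefficient of $\psi^{0}\wedge\theta^{k}$ from the $dK\wedge\psi$ term in \eqref{eq:x2.2}) gives a companion law $K_{i;0}=\Psi_{i}(M,K,R)$. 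Together these form a first-order ODE system for $(M,K)$ along each flow line.

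\emph{Step 2: initialising on $T$.} The proof of Corollary~\ref{kmmagic} uses only the vanishing of the codimension-one $(3,3)$-block, which holds unconditionally, so $K_{[i;j]}=0$ remains valid in the present setting. Combining the identity
\begin{equation*}
R_{ij0k}=-M_{ij;k}-K_{i}M_{jk}+K_{j}M_{ik}+K_{k}M_{ij}
\end{equation*}
from Section~\ref{sec:conf-flat-case} with an appropriate contraction (e.g.\ summing $k=j$ gives $2K^{j}M_{ij}=R_{ij0j}+M_{ij;j}$) produces a linear algebraic system for $K_{i}$ whose coefficients and right-hand sides are determined by $M_{ij}|_{T}$ and its tangential derivatives along $T$. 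The latter are extractable from $M|_{T}$ together with the known splitting $T_{p}M=T_{p}T\oplus\mathrm{span}(\mathbf{I}_{0})$ provided by the flow. Where $M_{ij}$ has sufficient rank -- which the non-vanishing hypothesis supplies -- this linear system determines $K_{i}|_{T}$ uniquely from $M_{ij}|_{T}$.

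\emph{Step 3: propagation.} Picard--Lindel\"of applied to the ODE system of Step 1 with initial data $(M_{ij}|_{T},K_{i}|_{T})$ yields a unique solution along each flow line through $T$. Since every point of the open set $U\supseteq T$ in which $M_{ij}\neq 0$ lies on such a flow line, this establishes the second claim of the theorem on $U$.

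\emph{Main obstacle.} The first claim -- global uniqueness of $M_{ij}$ throughout $M$, including at or beyond points where $M_{ij}$ vanishes -- is the hard part, because the algebraic determination of $K|_{T}$ in Step 2 degenerates there. The cleanest resolution would be to show that $\Phi_{ik}(M,K,R)$ depends on $K$ only through combinations of the form $K^{a}M_{\cdots}$ that are themselves determined algebraically by $M$, its tangential derivatives, and $R$. If so, the $M$-evolution reduces to an autonomous first-order ODE in $M$ alone, and Picard--Lindel\"of then delivers global uniqueness of $M$ unconditionally. Verifying this cancellation using the antisymmetry $M_{ij}=-M_{ji}$ together with the Bianchi identities for $R$ is the key computational step required to finish the proof.
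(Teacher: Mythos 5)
Your proposal diverges from the paper's argument in an essential way, and the divergence opens a genuine gap. The paper does not integrate an ODE along the flow lines at all: it observes that in the Gauss-type identity $R_{ijkl}=\widetilde R_{ijkl}+M_{ik}M_{jl}-M_{il}M_{jk}-2M_{ij}M_{lk}$ the base curvature $\widetilde R_{ijkl}$ is \emph{basic}, hence constant along each leaf. Knowing $M_{ij}$ at the point where a leaf meets $T$ therefore fixes $\widetilde R_{ijkl}$ on the whole leaf, and at every other point of the leaf the combination $M_{ik}M_{jl}-M_{il}M_{jk}-2M_{ij}M_{lk}=R_{ijkl}-\widetilde R_{ijkl}$ is known; the component choice of \eqref{eq:34} then pins down each $M_{ij}^{2}$, hence each $M_{ij}$ by smoothness. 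This is a pointwise algebraic determination that needs no knowledge of $K$, no initial data for $K$, and survives zeros of $M$ --- which is exactly what your own ``main obstacle'' paragraph concedes you cannot handle. Only afterwards is $K$ recovered from the linear system $R_{0i}=M^{j}{}_{i;j}+2K^{j}M_{ij}$ on the set where $M\neq 0$.

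The reason you cannot repair Step 1 is that the closed ODE system you posit does not exist. Extracting the $\psi^{0}\wedge\theta^{k}$ component of the relevant curvature block gives, for its symmetric part, \eqref{1stweyl}, and for its antisymmetric part an identity tying $M_{ik;0}$ to $K_{[i;k]}$: the flow derivative of $M$ is a \emph{transversal} derivative of $K$, not an algebraic function $\Phi(M,K,R)$, so the system is a genuine PDE along the foliation and Picard--Lindel\"of does not apply. There is moreover no curvature identity producing $K_{i;0}$ at all (the candidate component $R_{00\cdot\cdot}$ vanishes identically by antisymmetry). Relatedly, your Step 2 premise that Corollary \ref{kmmagic} ``holds unconditionally'' is false: its proof discards the contribution of $dM$ to the $\psi\wedge\theta\wedge\theta$ terms, which is legitimate only when $M$ is already known to be basic --- precisely the content of the homogeneity or conformal-flatness hypotheses, and precisely what fails for an arbitrary ambient manifold (the paper itself notes in section \ref{sec:conf-flat-case} that $K_{[i;j]}=0$ ``reduces to showing that $M_{ij}$ is basic''). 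Finally, $M_{ij}\neq 0$ does not give the antisymmetric matrix $M_{ij}$ full rank (it is never invertible in odd dimensions), so your ``sufficient rank'' claim needs more care than either you or, admittedly, the paper's own one-line treatment of the $K$ step provides.
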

\begin{proof}
  Look at \eqref{1stweyl} and below. For the equation involving $R_{ijkl}$, giving $M_{ij}$ at one point determines $\tilde R_{ijkl}$ for all points along the leaf, and together with values of $R_{ijkl}$ along the leaf, all values of $M_{ij}$ along the same leaf are determined. If in an open set $M_{ij}\neq 0$, then the equation for $R_{0i}$ will allow us to solve for $K^{i}$ uniquely given $M_{ij}$.
\end{proof}
Of course, there is no reason why any rigid flow should exist at all in a general space. An example of a rotational rigid flow in a non-conformally-flat spacetime is given in \cite{pirani1964}.

On the other hand, in a flat space if the flow is non-rotating but isometric, the above theorem is false, as one can easily give many examples by ``dragging the flow along''. See \cite{Giulini:2006p66} for an explicit construction.

\section{Conclusion}
\label{sec:conclusion}
In this paper we have described a general framework for dealing with problems of a structure-preserving submersion between manifolds. We gave some examples of how real problems can be adapted to our framework, and by using our framework we successfully extended the classical Herglotz-Noether theorem to all conformally flat spaces in all dimensions. Several interesting projects that fit within our framework that could be taken in the future are:
\begin{itemize}
\item Study the structure-preserving submersions arising in Yang-Mills and other gauge theories without introduction of any metric, using group properties only.
\item Study black hole solutions in higher dimensions that has constraints that can be given by structure-preserving submersions.
\item Study structure-preserving submersions arising in the fluid description of string theory, in the conformal framework \cite{Bhattacharyya:2007p31}.
\item Explore the adaptation of the framework to the supersymmetric case.
\end{itemize}
\bibliography{submersion}{}
\bibliographystyle{hplain}
\addcontentsline{toc}{section}{References}

\end{document}